\documentclass[lettersize,onecolumn]{IEEEtran}
\usepackage{amsfonts,color,morefloats,pslatex,a4wide}
\usepackage{amssymb,amsthm,amsmath,latexsym,pslatex,cite}
\usepackage{lineno,hyperref,mathtools}
\usepackage{pdflscape}
\usepackage{float}
\usepackage{caption}
\usepackage{algorithm,algpseudocode}

\newtheorem{theorem}{Theorem}[section]
\newtheorem{notation}{Notation}
\newtheorem{lemma}[theorem]{Lemma}
\newtheorem{corollary}[theorem]{Corollary}
\newtheorem{proposition}[theorem]{Proposition}

\theoremstyle{definition}
\newtheorem{definition}[theorem]{Definition}
\newtheorem{example}[theorem]{Example}

\theoremstyle{remark}
\newtheorem{remark}[theorem]{Remark}

\newcommand{\F}{\mathbb{F}}
\newcommand{\C}{\mathcal{C}}
\newcommand{\bu}{\textbf{u}}
\newcommand{\bv}{\textbf{v}}

\begin{document}

\title{Shortest Embeddings of Linear Codes with Arbitrary Hull Dimension \thanks{The authors are with School of Mathematics and Statistics \& Hubei Key Laboratory of Mathematical Sciences, Central China Normal University, Wuhan China 430079.}
}

\author{ Jiabin Wang\qquad Jinquan Luo
\thanks{This research is supported by National Natural Science Foundation of China
		(Nos. 12441102, 12171191, 12271199 ), SRMC Fund with grant no. 2024SRMC01
		and the Fundamental Research Funds for the Central Universities with grant no. CCNU25JCPT031.}
\thanks{E-mails: wangjiabin0821@163.com(J.Wang), luojinquan@ccnu.edu.cn(J.Luo)}}

\markboth{Journal of \LaTeX\ Class Files,~Vol.~1, No.~4, April~2026}%
{Shell \MakeLowercase{\textit{et al.}}: A Sample Article Using IEEEtran.cls for IEEE Journals}


\maketitle

\begin{abstract}

In this paper, we study the shortest $t$-dimensional hull embeddings of linear codes in both Euclidean and Hermitian cases, extending the existing research on the shortest LCD and self-orthogonal embeddings to arbitrary hull dimensions and arbitrary finite fields.
		We obtain the exact length of such embeddings by adopting tools from quadratic form theory over finite fields and classical group theory. Based on the congruence equivalence class of Gram matrices of linear codes, we classify linear codes into distinct ``types'' and present corresponding constructive algorithms. In particular, we improve the results of An et al. \cite{An2025} and fully determine the length of the shortest self-orthogonal embeddings for linear codes. Finally, applying these algorithms, we provide examples for various settings and obtain several optimal codes inequivalent to those in the BKLC database.

\end{abstract}

\begin{IEEEkeywords}
Euclidean hull, Hermitian hull, self-orthogonal codes, LCD codes, optimal codes.
\end{IEEEkeywords}

\section{Introduction}
	The hull of a linear code, defined as the intersection of the code and its dual \cite{Assmus1990}, is a core concept in coding theory, as its dimension characterizes the structural properties of linear codes. Specifically, codes with a trivial hull (dimension $0$), known as linear complementary dual (LCD) codes, are introduced by Massey in solving  two-user binary adder channel problem \cite{Massey1992}. These codes later are found to have crucial applications in cryptography as countermeasures against side-channel and fault-injection attacks \cite{Carlet2016,Hossain2021}. On the other hand, codes with full hull (dimension equal to the code dimension), namely self-orthogonal codes, have long been fundamental in quantum information theory, serving as the key ingredient for the construction of quantum error-correcting codes (see \cite{Cald1998,Dast2024,Fred2025,Wang2024}). There have also been studies and applications of linear codes with intermediate hull dimensions\cite{Chen2023,Sok2022}.
	
	As a foundational work, the enumeration and asymptotic behavior of linear codes with a prescribed hull dimension have been thoroughly investigated. Luerssen and Ravagnani \cite{Glues2024} studied $\ell$-dimensional Euclidean hulls in finite bilinear spaces, derived enumeration and weight distribution results, and showed that random self-orthogonal codes are asymptotically MDS. Li et al. established closed mass formulas for linear codes with prescribed Euclidean hull dimension and classify small ternary codes \cite{Li2024}, further extending the result to Hermitian and symplectic hulls \cite{Li2025}.
	
	Recently, the embedding problem has attracted increasing attention: the core question is how to embed an arbitrary linear code into a target code with prescribed hull properties by adding the minimum number of coordinates. Specifically, Kim et al. \cite{Kim2021} proposed an algorithmic to embed a given binary $k$-dimensional linear code $\C$ $(k = 3, 4)$ into a self-orthogonal code of the shortest length that has preserves the dimension $k$ and satisfies $d \geq d(\C)$. For $k > 4$, they suggested a recursive method to embed a $k$-dimensional linear code into a self-orthogonal code. Furthermore, An et al. studied the shortest self-orthogonal embeddings of binary linear codes, proposed algorithms to construct self-dual codes from Hamming codes, and built several codes with new parameters \cite{An2025}. They later extended the framework to LCD codes, characterizing all forms of the shortest LCD embeddings of a linear code over small finite fields and constructing several new optimal LCD codes \cite{An2026}.
	
	However, most existing works only considers extreme hull dimensions-- LCD codes and self-orthogonal codes. No systematic research exists on the shortest embeddings for arbitrary hull dimensions and arbitrary finite fields. In fact, linear codes with intermediate hull dimensions are practically relevant, as they can be applied to construct entanglement-assisted quantum error-correcting codes (EAQECCs) (see \cite{Fang2020,Sok2022}), which motivates our work.
	
	In this paper, we extend existing research on the shortest LCD and self-orthogonal embeddings to arbitrary hull dimensions and arbitrary finite fields. Our results apply to both Euclidean and Hermitian cases. Our main goal is to determine the length of the shortest $t$-dimensional hull embeddings and provide explicit construction algorithms for such embeddings. For a linear code $\C$ with a generator matrix $G$, since the hull dimension of $\C$ is closely related to the rank of $GG^*$ (Theorem \ref{rahu}), our underlying idea is as follows: we first transform $GG^*$ into an equivalent ``canonical'' form, then adjust its rank by padding or eliminating entries at specific positions. Specifically, the main methods and results of this paper are summarized below.
	\begin{itemize}
		\item [(1)] In the Hermitian case, based on the congruence canonical form of Hermitian matrices, we prove that the length of the shortest $t$-dimensional Hermitian hull embeddings is $n+|t-\ell|$. In Section 5, we present an explicit algorithm for constructing such embeddings, and obtain an almost optimal $[11,6,4]_9$ code and an optimal $[6,3,4]_4$ Hermitian self-orthogonal code using this algorithm.
		\item [(2)] In the Euclidean case, we classify linear codes into four types based on the congruence equivalence classes of their Gram matrices. In Sections 3 and 4, we determine the exact lengths of the shortest $t$-dimensional Euclidean hull embeddings for each type through separate discussions. In Section 5, we present algorithms for constructing all such embeddings and apply these algorithms to obtain examples of good linear codes. In particular, we obtain several optimal self-orthogonal codes with parameters $[19,8,8]_3$, $[20,8,9]_3$, $[8,4,4]_2$, and $[20,6,8]_2$.
	\end{itemize}
	
	\section{Preliminaries}
	We begin with basic notions and fundamental results in coding theory \cite{Huff2003}.
	
	Let $\F_q$ be the finite field with $q$ elements. A $k$-dimensional subspace $\C$ of $\F_q^n$ is called a \textit{linear code} of length $n$ and
	dimension $k$, denoted as an $[n,k]$ code. The vectors of $\C$ are called \textit{codewords}. A \textit{generator matrix} of $\C$, written as
	$G(\C)$, is a $k \times n$ matrix whose rows form a basis of $\C$. Throughout this paper, $O$ denotes the all-zero matrices, $\textbf{0}$ denotes the all-zero vectors, and $I_j$ denotes the $j \times j$ identity matrix.
	
	For any $\bu=(u_1,\dots,u_n)\text{ and } \bv=(v_1,\dots,v_n) \in \F_q^n$, the \textit{Euclidean inner product} is defined as
	$$
	\left \langle \bu,\bv\right \rangle_E=\sum_{i=1}^{n}u_iv_i.
	$$
	 The \textit{Euclidean dual} of an $[n,k]$ code $\C$ over $\F_q$ is
	$$
	\C^{\bot_E}=\{\bv\in \F^n_q : \left \langle \bu,\bv\right \rangle_E=0 \text{ for all } \bu \in \C\}.
	$$
	Similarly, for $\bu,\bv\in \F_{q}^n$ (where $q$ is an even power of a prime), the \textit{Hermitian inner product} is
	$$
	\left \langle \bu,\bv\right \rangle_H=\sum_{i=1}^{n}u_iv^{\sqrt{q}}_i.
	$$
	The \textit{Hermitian dual} of $\C$ is
	$$
	\C^{\bot_H}=\{\bv\in \F^n_{q} : \left \langle \bu,\bv\right \rangle_H=0 \text{ for all } \bu \in \C\}.
	$$
	To deal with the Euclidean and Hermitian cases in a unified way, we next use the notation $\C^{\bot}$ to denote either the Euclidean dual or Hermitian dual of $\C$. The \emph{(Euclidean or Hermitian) hull} of a code $\C$ is defined as
	$$
	\text{Hull}(\C)=\C \cap \C^{\bot}.
	$$
	An $[n,k]$ code $\C$ is \textit{self-orthogonal} if $\dim (\text{Hull}(\C)) =k$, \textit{self-dual} if $\C = \C^{\bot}$, and \textit{linear complementary dual (LCD)} if $\dim (\text{Hull}(\C)) =0$.
	
	For any vector $\bu \in \F_q^n$, its \emph{Hamming weight} $\mathrm{wt}(\mathbf{u})$ is the number of nonzero components of $\bu$. For $\bu, \bv \in \F_q^n$, the \emph{Hamming distance} between them is defined as $d(\bu,\bv) = \mathrm{wt}(\bu-\bv)$. The \emph{minimum distance} of a linear code $\C$ is defined as
	$$d(\C)=\min \{\text{wt}(\bu) : \bu\in \C\setminus \{\textbf{0}\}\}.$$
	An $[n,k,d]_q$ code is a $q$-ary $[n,k]$ linear code with minimum distance $d$. An $[n,k,d]_q$ linear code is called \emph{ (distance) optimal} if there is no  $[n, k, d + 1]_q$ linear code. Furthermore, if
	there is an $[n, k, d + 1]_q$ linear code, but no  $[n, k, d + 2]_q$ linear code, we call this $[n, k, d]_q$ linear code \textit{almost (distance)
		optimal}\cite{Xie2025}. The following key theorem clarifies the relationship between the hull dimension of a linear code and its generator matrix.
	
	\begin{theorem}\label{rahu}
		(\cite{Li2019}) Let $\C$ be an $[n,k]$ linear code over $\F_q$ with a generator matrix $G$. Then the dimension $\ell$ of the Euclidean hull of $\C$ is given by
		$$
		\ell=k-\text{rank}(GG^T),
		$$
		where $G^T$ is the transpose of $G$. Similarly, when $q$ is an even power of a prime, the dimension $\ell_h$ of the Hermitian hull of $\C$ is given by
		$$
		\ell_h=k-\text{rank}(GG^\dagger),
		$$
		where $G^\dagger$ is the matrix obtained by taking the $\sqrt{q}$-th power of each entry of $G^T$.
	\end{theorem}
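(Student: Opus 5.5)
The plan is to identify the hull with the kernel of a linear map on coefficient vectors, and then read off its dimension by rank--nullity. Every codeword of $\C$ has the form $\mathbf{x}G$ for a unique $\mathbf{x}\in\F_q^k$, because the rows of $G$ are linearly independent. So $\mathbf{x}\mapsto \mathbf{x}G$ is an isomorphism $\F_q^k\to\C$. We first treat the Euclidean case and then indicate the changes needed for the Hermitian one.

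\emph{Step 1 (characterizing hull membership).} A vector $\bv\in\F_q^n$ lies in $\C^{\bot_E}$ exactly when it is orthogonal to every row of $G$, that is, when $G\bv^T=\mathbf{0}$. Hence a codeword $\mathbf{x}G$ lies in $\mathrm{Hull}(\C)$ if and only if $G(\mathbf{x}G)^T=GG^T\mathbf{x}^T=\mathbf{0}$. The isomorphism $\mathbf{x}\mapsto\mathbf{x}G$ therefore restricts to an isomorphism from the null space $\{\mathbf{x}\in\F_q^k: GG^T\mathbf{x}^T=\mathbf{0}\}$ onto $\mathrm{Hull}(\C)$.

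\emph{Step 2 (rank--nullity).} The matrix $GG^T$ is $k\times k$, so the null space in Step 1 has dimension $k-\mathrm{rank}(GG^T)$. This gives $\ell=k-\mathrm{rank}(GG^T)$.

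\emph{Step 3 (Hermitian case).} Write $s=\sqrt q$. Here $\bv\in\C^{\bot_H}$ means $\sum_j g_{ij}v_j^{s}=0$ for every row $i$ of $G$. It is convenient to conjugate: since $\alpha\mapsto\alpha^s$ is a field automorphism of order $2$, this condition is equivalent to $\sum_j g_{ij}^{s}v_j=0$, i.e.\ $\overline{G}\bv^T=\mathbf{0}$, where the bar denotes entrywise $s$-th power. Substituting $\bv=\mathbf{x}G$ gives $\overline{G}G^T\mathbf{x}^T=\mathbf{0}$. The matrix $\overline{G}G^T$ is the entrywise conjugate of $G\overline{G}^T=GG^\dagger$. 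Conjugation is a field automorphism applied entrywise, so it preserves rank, and hence $\mathrm{rank}(\overline{G}G^T)=\mathrm{rank}(GG^\dagger)$. The nullity argument of Step 2 then yields $\ell_h=k-\mathrm{rank}(GG^\dagger)$.

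The only point requiring care is the Hermitian bookkeeping in Step 3: one has to place the conjugation on the correct factor and justify that rank is unchanged under the Frobenius automorphism. This holds because the automorphism preserves the vanishing of all minors. Everything else is linear algebra.
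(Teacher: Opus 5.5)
Your proof is correct and complete. Identifying $\mathrm{Hull}(\C)$ with the image under $\mathbf{x}\mapsto\mathbf{x}G$ of the null space of $GG^T$ (resp.\ of $\overline{G}G^T$, whose rank equals $\mathrm{rank}(GG^\dagger)$ because entrywise Frobenius preserves rank) and then applying rank--nullity is the standard argument; the paper itself gives no proof here and simply cites \cite{Li2019}, so there is nothing to compare against.
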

	
	\section{Shortest $t$-dimensional hull embeddings of linear codes when $0\leq t < \ell$}
	Since the Euclidean and Hermitian cases share similar methods in some parts, we adopt the following unified notations.
	\begin{notation}\label{star}
		As in Theorem \ref{rahu}, let $\C$ be an $[n,k]$ linear code over $\F_q$ with $\dim(\text{Hull}(\C))=\ell$. For a matrix $G$ over $\F_q$, we define $G^*$ as follows: $G^*=G^T$ for the Euclidean case and $G^*=G^\dagger$ for the Hermitian case.
	\end{notation}
	
	\begin{definition}
		Let $\C$ be an $[n,k]$ code over $\F_q$. For any $n' \geq n$, an $[n',k]$ code $\widetilde{\C}$ over $\F_q$ is called a \textit{$t$-dimensional Euclidean (resp. Hermitian) hull embedding} of $\C$ if:
		\begin{itemize}
			\item [i)] $\dim (\text{Hull}(\widetilde{\C}))=t$;
			\item [ii)] $\C$ can be obtained by puncturing $\widetilde{\C}$ on some coordinate subset $S$.
		\end{itemize}
	\end{definition}
	We first prove the existence of $t$-dimensional hull embeddings for linear codes. The next proposition holds for both Euclidean and Hermitian hulls.
	\begin{proposition}
		For $0\leq t \leq k$, a $t$-dimensional hull embedding exists for any $[n,k] $ code $\C$ over $\F_q$.
	\end{proposition}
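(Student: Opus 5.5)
Let $G$ be a generator matrix of $\C$. The plan is to append columns to $G$. Puncturing the appended coordinates returns $\C$, and the extended matrix still has rank $k$, so the result is an $[n',k]$ code. By Theorem \ref{rahu}, it then suffices to produce a $k\times m$ matrix $A$ with
$$\text{rank}\big(GG^*+AA^*\big)=k-t,$$
since $\widetilde G=[G\mid A]$ satisfies $\widetilde G\widetilde G^*=GG^*+AA^*$.

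The first step is to reduce to a convenient realization of a Gram matrix. The key observation is that every Hermitian matrix (Hermitian case), or every symmetric matrix (Euclidean case), of the form $-GG^*+D$ with $D$ chosen suitably can be written as $AA^*$. Rather than use that in general, I would proceed more crudely: choose $A$ so that $AA^*=-GG^*+B$ for a target $B$ of rank $k-t$. For example, take $B=\mathrm{diag}(1,\dots,1,0,\dots,0)$ with $k-t$ ones. It then suffices to show that every symmetric (resp. Hermitian) $k\times k$ matrix $M$ over $\F_q$ is of the form $AA^*$ for some $A$ with enough columns.

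For that claim, first write $M=\sum_i \lambda_i e_ie_i^*+\sum_{i<j}(\mu_{ij}e_ie_j^*+\bar\mu_{ij}e_je_i^*)$, where $\bar\mu=\mu$ in the Euclidean case. Each off-diagonal pair is a difference of rank-one pieces: in the Euclidean case with $q$ odd, $\mu(e_ie_j^T+e_je_i^T)=\tfrac{\mu}{2}\big((e_i+e_j)(e_i+e_j)^T-e_ie_i^T-e_je_j^T\big)$. It therefore reduces to showing that each $c\,vv^*$ with $c\in\F_q$, and in particular each scalar $c$ (a $1\times1$ matrix), is a sum of terms $aa^*$. In the Euclidean case, every element of $\F_q$ is a sum of at most two squares, so $c\,vv^T=(\alpha v)(\alpha v)^T+(\beta v)(\beta v)^T$ with $\alpha^2+\beta^2=c$. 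In the Hermitian case, $c$ lies in $\F_{\sqrt q}$ and the norm map is surjective, so $c=\alpha^{\sqrt q+1}$. Each such term contributes one column of $A$, so concatenating these columns gives the required $A$.

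The main obstacle is characteristic $2$ in the Euclidean case. There every $aa^T$ has diagonal entries that are squares, which is fine since every element is a square. However, $\sum a a^T$ is always a symmetric matrix whose off-diagonal part is constrained, and the halving trick above fails. Here I would avoid realizing arbitrary $M$ and instead pick the target $B$ cleverly. Writing $GG^T=\sum_{j=1}^n g_jg_j^T$ over the columns $g_j$ of $G$, appending a copy of each column $g_j$ cancels it, since $2=0$. With $A_0=G$ we get $GG^T+A_0A_0^T=O$, and then appending the $k-t$ columns $e_1,\dots,e_{k-t}$ yields $\mathrm{diag}(I_{k-t},O)$ of rank $k-t$. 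This duplication trick actually works uniformly: in odd characteristic, append columns $\alpha g_j,\beta g_j$ with $\alpha^2+\beta^2=-1$, and in the Hermitian case append $\gamma g_j$ with $\gamma^{\sqrt q+1}=-1$. Both kill $GG^*$ and give a clean proof for all $t$. I would present this uniform version, with the field-element existence facts (sum of two squares, surjectivity of the norm) as the only inputs to verify.
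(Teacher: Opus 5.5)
Your final uniform construction is correct and is essentially the paper's argument. The paper appends $p-1$ further copies of $G$ (with $p=\mathrm{char}\,\F_q$) and then $D=\begin{bmatrix} I_{k-t}\\ O\end{bmatrix}$, so that $\widetilde G\widetilde G^*=pGG^*+DD^*=DD^*$; this coincides exactly with your characteristic-$2$ Euclidean construction $[G,G,D]$. Your scaled copies ($\alpha G,\beta G$ with $\alpha^2+\beta^2=-1$, or $\gamma G$ with $\gamma^{\sqrt q+1}=-1$) achieve the same cancellation with a length bound independent of $p$, at the cost of two field facts (sums of two squares, surjectivity of the norm) that the paper's version does not need. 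The abandoned ``realize every $M$ as $AA^*$'' detour is unnecessary. Its remark that sums $\sum aa^T$ have a constrained off-diagonal part in characteristic $2$ is also false, since there $e_ie_j^T+e_je_i^T=(e_i+e_j)(e_i+e_j)^T+e_ie_i^T+e_je_j^T$.
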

	\begin{proof}
		Let $G$ be a generator matrix of $\C$ and let $p$ be the characteristic of $\F_q$. Consider the matrix
		$$\widetilde{G} = [\underbrace{G,G,\dots,G}_{p},D], $$
		where $D$ is a $k \times (k-t)$ matrix given by
		$$D=\begin{bmatrix} I_{k-t}\\ O\end{bmatrix}.$$
		Then
		$$
		\text{rank}(\widetilde{G}\widetilde{G}^*)=\text{rank}(pGG^*+DD^*)=k-t.
		$$
		According to Theorem \ref{rahu}, the $[np+k-t,k]$ linear code generated by $\widetilde{G}$ is a $t$-dimensional hull embedding of $\C$.
	\end{proof}
	Let $\C$ be an $[n,k]$ linear code over $\F_q$ and let $\widetilde{\C}$ be a $t$-dimensional hull embedding of $\C$. We call $\widetilde{\C}$ a \textit{shortest $t$-dimensional hull embedding} of $\C$ if its length is  minimal among all $t$-dimensional hull embeddings of $\C$. With existence established, a natural question is to find the length of a shortest $t$-dimensional hull embeddings. In the remainder of this section, we first deal with the case $t$  less than the hull dimension of the original linear code.
	\begin{theorem}\label{nlt}
		Let $\C$ be an $[n,k]$ code over $\F_q$ with $\ell =\dim(\text{Hull}(\C))$ and $0 \leq t < \ell$. Then the length of the shortest $t$-dimensional hull embeddings of $\C$ is $n+\ell-t$.
	\end{theorem}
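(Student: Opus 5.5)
Throughout, let $G$ be a generator matrix of $\C$ and $r=\mathrm{rank}(GG^*)=k-\ell$. By Theorem \ref{rahu}, $\widetilde{\C}$ has a $t$-dimensional hull exactly when $\mathrm{rank}(\widetilde G\widetilde G^*)=k-t$. The proof splits into a lower bound and a matching construction.

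\textbf{Lower bound.} Let $\widetilde{\C}$ be any $t$-dimensional hull embedding of length $n'$. Puncturing $\widetilde{\C}$ on $S$ gives $\C$, and $\widetilde{\C}$ has dimension $k$, so the map $\widetilde{\C}\to\C$ is a bijection. Hence, after permuting coordinates, $\widetilde{\C}$ has a generator matrix $\widetilde G=[G,A]$ with $A$ a $k\times(n'-n)$ matrix. Then
$$\widetilde G\widetilde G^*=GG^*+AA^*,$$
and by subadditivity of rank,
$$k-t=\mathrm{rank}(\widetilde G\widetilde G^*)\le \mathrm{rank}(GG^*)+\mathrm{rank}(AA^*)\le k-\ell+(n'-n).$$
This gives $n'\ge n+\ell-t$.

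\textbf{Construction.} We need a $k\times(\ell-t)$ matrix $A$ with $\mathrm{rank}(GG^*+AA^*)=k-t$. Since $\ell\ge 1$, $GG^*$ is singular. First change basis of $\C$, i.e. replace $G$ by $PG$ with $P$ invertible, so that $GG^*$ becomes $PGG^*P^*$. Choose $P$ so that the first $\ell$ rows of $PG$ form a basis of $\mathrm{Hull}(\C)$, and the last $k-\ell$ rows span a complement $W$. The rows of $PG$ lying in the hull are orthogonal to all of $\C$, so
$$PGG^*P^*=\begin{bmatrix}O&O\\O&M\end{bmatrix},$$
where $M$ is the $(k-\ell)\times(k-\ell)$ Gram matrix of $W$. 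Since $\mathrm{rank}(M)=k-\ell$, $M$ is nonsingular. Now take
$$A=\begin{bmatrix}I_{\ell-t}\\ O\end{bmatrix}.$$
Then
$$PGG^*P^*+AA^*=\mathrm{diag}(I_{\ell-t},O_{t},M),$$
which has rank $\ell-t+k-\ell=k-t$. Puncturing $[PG,A]$ on the last $\ell-t$ coordinates recovers a generator matrix of $\C$, so the code generated by $[PG,A]$ is a $t$-dimensional hull embedding of length $n+\ell-t$. Multiplying on the left by $P^{-1}$ expresses this code in terms of $G$ itself.

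The same argument works verbatim in both the Euclidean and Hermitian cases, since $I^*=I$ and $(PX)^*=X^*P^*$.

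\textbf{Main obstacle.} The delicate step is justifying the reduction in the lower bound, namely that $\widetilde{\C}$ has a generator matrix of the form $[G,A]$. This holds because puncturing preserves the dimension $k$, so a generator matrix of $\widetilde{\C}$ restricts to a generator matrix of $\C$, which can then be converted to $G$ by a left multiplication by an invertible matrix.
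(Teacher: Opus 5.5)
Your proof is correct and is essentially the paper's argument: the lower bound uses the same rank-subadditivity chain $k-t\le \mathrm{rank}(GG^*)+\mathrm{rank}(D)\le k-\ell+s$. The construction is also the same, appending the columns $\begin{bmatrix} I_{\ell-t}\\ O\end{bmatrix}$ to a generator matrix whose first $\ell$ rows span $\mathrm{Hull}(\C)$. Your justification that every embedding has a generator matrix of the form $[G,A]$, via bijectivity of the puncturing map, fills in a step the paper takes for granted.
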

	\begin{proof}
		Let $G$ be a generator matrix of $\C$. Suppose that the length of a shortest $t$-dimensional hull embedding $\widetilde{\C}$ of $\C$ is $n+s$. Then, there exists a $k \times s$ matrix $D$ such that $\widetilde{G}=[G,D]$ generates $\widetilde{\C}$. By Theorem \ref{rahu}, we have
		\begin{equation}
			\begin{aligned}
				k-t=\text{rank}(\widetilde{G}\widetilde{G}^*)&=\text{rank}(GG^*+DD^*)\\
				&\leq \text{rank}(GG^*)+\text{rank}(DD^*)\\
				&\leq \text{rank}(GG^*)+\text{rank}(D)\\
				&\leq k-\ell+s.
			\end{aligned}\label{lmt}
		\end{equation}
		Hence $s \geq \ell-t$. On the other hand, let $G'$ be a generator matrix of $\C$ of the form
		$$G'=\begin{bmatrix} G(\text{Hull}(\C))\\A\end{bmatrix},$$
		where $ G(\text{Hull}(\C))$ is a generator matrix of $\text{Hull}(\C)$. Take a $k \times (\ell-t)$ matrix
		$$D=\begin{bmatrix} I_{\ell-t}\\ O\end{bmatrix}.$$
		We have
		$$[G',D][G',D]^*=\begin{bmatrix} \widetilde{D} & \\  & AA^*\end{bmatrix}$$
		with
		$$ \widetilde{D}=\begin{bmatrix} I_{\ell-t}&\\ & O \end{bmatrix}.$$
		Clearly, $\text{rank}([G',D][G',D]^*)=\text{rank}(AA^*)+\text{rank}(\widetilde{D})=k-t$, and it follows that $[G',D]$ generates a $t$-dimensional hull embedding of $\C$ with length $n+\ell-t$.
	\end{proof}
	From Theorem \ref{nlt}, we also obtain the following conclusion: the length of a shortest $t$-dimensional hull embedding of $\C$ is equal to the length of a shortest $t$-dimensional hull embedding of $ \text{Hull}(\C)$, which is independent of the dimension $k$.
	\begin{corollary}\label{coror}
		Let $\C$ be an $[n,k]$ code over $\F_q$ with $\ell =\dim(\text{Hull}(\C))$, $0\leq t < \ell$, and let $\widetilde{\C}$ be a shortest $t$-dimensional hull embedding of $\C$ with generator matrix
		$$
		\widetilde{G}=\begin{bmatrix} G(\text{Hull}(\C))&D_1 \\A &D_2\end{bmatrix}.
		$$
		Then $\text{rank}(D_1)=\ell -t$.
	\end{corollary}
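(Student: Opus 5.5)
We need $\text{rank}(D_1)=\ell-t$. Write $H=G(\text{Hull}(\C))$, an $\ell\times n$ matrix. Its rows lie in $\C^{\bot}$, so $HH^*=O$ and $HA^*=O$, and hence also $AH^*=O$. Since $\widetilde{\C}$ is shortest, Theorem \ref{nlt} shows that $D=\begin{bmatrix} D_1\\ D_2\end{bmatrix}$ is a $k\times(\ell-t)$ matrix. This already gives the upper bound $\text{rank}(D_1)\leq \ell-t$, because $D_1$ has only $\ell-t$ columns.

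For the lower bound, compute the Gram matrix
$$\widetilde{G}\widetilde{G}^*=\begin{bmatrix} D_1D_1^* & D_1D_2^*\\ D_2D_1^* & AA^*+D_2D_2^*\end{bmatrix}.$$
This can be rewritten as
$$\begin{bmatrix} O&O\\ O&AA^*\end{bmatrix}+DD^*.$$
Because $\C$ has hull dimension $\ell$, Theorem \ref{rahu} gives $\text{rank}(AA^*)=k-\ell$. By Theorem \ref{rahu} applied to $\widetilde{\C}$, the left-hand side has rank $k-t$.

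The key step is to bound the rank of $\widetilde{G}\widetilde{G}^*$ more finely, in terms of $\text{rank}(D_1)$. First look at the top $\ell$ rows, $[D_1D_1^*,\ D_1D_2^*]=D_1[D_1^*,\ D_2^*]$. Their rank is at most $r:=\text{rank}(D_1)$. Next look at the bottom $k-\ell$ rows, $[D_2D_1^*,\ AA^*+D_2D_2^*]$. Their rank is at most $\text{rank}\big([O,\ AA^*]\big)+\text{rank}(D_2D^*)$, which is at most $(k-\ell)+\text{rank}(D_2)$.

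These row-block estimates are too weak, so I would instead use the symmetric structure. Write $\widetilde{G}\widetilde{G}^*=M+DD^*$ with $M=\mathrm{diag}(O,AA^*)$. The image of $DD^*$ is contained in the column space of $D$. Choose an invertible transformation $P$ acting on the $\ell$ hull rows, replacing $H$ by another basis $PH$ of the hull, such that $PD_1$ has its last $\ell-r$ rows zero. After this change, the last $\ell-r$ hull rows of $\widetilde{G}$ are of the form $[h,\mathbf{0}]$ with $h\in\text{Hull}(\C)$. Such a row is orthogonal to every row of $\widetilde{G}$: its inner product with $[g,d]$ equals $\langle h,g\rangle$ (up to conjugation), which is $0$ since $h\in\C^{\bot}$ and $g\in\C$. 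Hence these $\ell-r$ rows lie in $\text{Hull}(\widetilde{\C})$. They are linearly independent, because the rows of $PH$ are. Therefore $t\geq \ell-r$, that is, $r\geq \ell-t$.

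Combining the two bounds gives $\text{rank}(D_1)=\ell-t$. The step I expect to be the main obstacle is the hull-vector argument in the previous paragraph, in particular checking that $[h,\mathbf{0}]$ is orthogonal to every row of $\widetilde{G}$ in the Hermitian setting as well as the Euclidean one. That check is straightforward: $\langle h,g\rangle_H=0$ for all $g\in\C$ because $h\in\C^{\bot_H}$, so the argument goes through uniformly.
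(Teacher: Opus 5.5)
Your proof is correct and follows the paper's route: the last $\ell-r$ rows of your $P$ serve as the paper's full-row-rank $U$ with $UD_1=O$, and the resulting rows $[h,\mathbf{0}]$ are $\ell-r$ independent vectors in $\text{Hull}(\widetilde{\C})$, forcing $\ell-r\le t$; you also make explicit the upper bound from $D_1$ having only $\ell-t$ columns, which the paper leaves implicit. The middle paragraph with the Gram-matrix block estimates is never used and can be dropped.
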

	\begin{proof}
		Suppose that $r = \text{rank}(D_1) < \ell-t$. Then, the homogeneous linear equation system $XD_1=\textbf{0}$ has $\ell-r$ linearly independent solutions, i.e., there exists an $(\ell -r )\times\ell$ row full-rank matrix $U$ such that $UD_1=\textbf{0}$. Since $ G(\text{Hull}(\C))$ is also row full-rank, we can verify
		$$\text{rank}(U\cdot[G(\text{Hull}(\C)),D_1])=\ell-r.$$
		Furthermore, each row of matrix $U\cdot[G(\text{Hull}(\C)),D_1]$ is a codeword of $\widetilde{\C}$, and all belong to the hull of $\widetilde{\C}$, which is a contradiction. Thus, $\text{rank}(D_1)=\ell-t$.
	\end{proof}
	
	\section{Shortest $t$-dimensional hull embeddings of linear codes when $\ell < t \leq k$}
	In this section, we determine the shortest $t$-dimensional hull embeddings when $t$ exceeds the hull dimension of the original code. We show that the Euclidean and Hermitian inner product structures lead to different conclusions. Recalling Notation \ref{star}, we first present the following concepts.
	\begin{notation}
		For two $n\times n$ matrices $A$ and $B$ over $\F_q$, we say they are congruent if there is an $n \times n$ invertible matrix $D$ such that $DAD^*=B$, and we denote this by $A \simeq B$.
	\end{notation}
	Let $G=[g_{ij}]$ be an $n\times n$ matrix over $F$. We say $G$ is \textit{alternating} if $g_{ii}=0$ and $g_{ij}=-g_{ji}$ for all $1\leq i,j \leq n$. We say $G$ is \textit{symmetric} if $G=G^T$ and \textit{Hermitian} if $G=G^\dagger$. Since matrix congruence is an equivalence relation, we present below the canonical forms of some matrices over finite fields under congruence transformations.
	\begin{proposition}\label{acon}
		(\cite[second part]{Dickson1899}) Let $A$ be an $n\times n$ matrix over $\F_q$ of rank $r$.
		\begin{itemize}
			\item [i)] If $A$ is a Hermitian matrix, then
			\begin{equation*}
				\begin{aligned}
					A\simeq \begin{bmatrix} I_r & \\  & O \end{bmatrix}.
				\end{aligned}
			\end{equation*}
			\item [ii)] If $q$ is an odd prime power and $A$ is a symmetric matrix, then
			\begin{equation*}
				\begin{aligned}
					A\simeq \begin{bmatrix} I_r &  \\  & O \end{bmatrix} \text{ or }\begin{bmatrix} I_{r-1} &  &\\ &z& \\ & & O \end{bmatrix},
				\end{aligned}
			\end{equation*}
			where $z \in \F_q^* \setminus {\F_q^*}^2$. Furthermore, the above two matrices are not congruent.
			\item [iii)] If $A$ is an alternating matrix, then $r$ is even and
			$$A \simeq \text{diag}(\underbrace{J,\dots,J}_{\frac{r}{2}},0,\dots,0), $$\\[-1em]
			where $J=\begin{bmatrix} 0 & 1 \\ -1 & 0 \end{bmatrix}$.
		\end{itemize}
		
	\end{proposition}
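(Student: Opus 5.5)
We prove the three parts separately, each by induction on $n$. Each part uses the same step: find a nonzero vector $x$ with a suitable value of $xAx^*$, change basis so that $x$ is the first basis vector, and then clear the first row and column by elimination. We write $A=GG^*$ only as motivation. Here $A$ is an arbitrary matrix of the stated type, and congruence means $A\mapsto DAD^*$ with $D$ invertible.

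\emph{Part i).} Let $\sigma(a)=a^{\sqrt{q}}$. If $A\neq O$, we find a vector $x$ with $xAx^\dagger\neq 0$. If some diagonal entry $a_{ii}$ is nonzero, take $x=e_i$. Otherwise pick $a_{ij}\neq 0$ with $i\ne j$, and set $x=e_i+\lambda e_j$. Then $xAx^\dagger=\mathrm{Tr}(\bar\lambda a_{ij})$, where $\mathrm{Tr}$ is the trace from $\F_q$ to $\F_{\sqrt q}$. This trace is surjective, so some $\lambda$ makes the value nonzero. The value $c=xAx^\dagger$ lies in $\F_{\sqrt q}^*$, and the norm map $\F_q^*\to\F_{\sqrt q}^*$ is surjective, so we may rescale $x$ to get $c=1$. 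We extend $x$ to a basis and subtract multiples of $x$ from the other basis vectors to kill the off-diagonal entries of row $1$ and column $1$. Hermitian symmetry kills the column entries along with the row entries, giving $A\simeq \mathrm{diag}(1,A')$ with $A'$ Hermitian of rank $r-1$. Induction finishes.

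\emph{Part ii).} The same argument applies for $q$ odd. Here $x=e_i+e_j$ gives $xAx^T=2a_{ij}\neq 0$ when the diagonal vanishes, and this needs $2\neq 0$. We obtain $A\simeq\mathrm{diag}(d_1,\dots,d_r,0,\dots,0)$ with each $d_i\ne 0$. Rescaling the $i$-th basis vector by $\mu$ multiplies $d_i$ by $\mu^2$, so we may take every $d_i\in\{1,z\}$.

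The key lemma is that $\mathrm{diag}(z,z)\simeq I_2$. This holds because $zx^2+zy^2$ represents $1$: by a counting argument, the two sets $\{x^2\}$ and $\{z^{-1}-y^2\}$ each have size $(q+1)/2$, so they intersect. Once $\mathrm{diag}(z,z)$ represents $1$, the first step of the induction splits it as $\mathrm{diag}(1,z\cdot z)=\mathrm{diag}(1,z^2)\simeq I_2$. Applying the lemma repeatedly pairs off the $z$'s, leaving at most one.

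To see that the two forms are not congruent, use the determinant of the nondegenerate part. Congruence preserves the radical $\{x: xA=0\}$, so the induced nondegenerate forms of rank $r$ are congruent. Their determinants then differ by the factor $\det(D)^2$, a square. Since $1$ and $z$ lie in different square classes, the forms cannot be congruent.

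\emph{Part iii).} If $A\ne O$ and $a_{ij}\neq0$, then $e_i,e_j$ span a block $\begin{bmatrix}0&a\\-a&0\end{bmatrix}$. Scaling $e_j$ by $a^{-1}$ turns this block into $J$. We then clear the other entries in rows and columns $i,j$ using the hyperbolic pair: replace each $v$ by $v-(vAe_j^*)e_i+(vAe_i^*)e_j$, up to signs. This gives $A\simeq\mathrm{diag}(J,A')$, and induction finishes. The rank equals twice the number of $J$ blocks, so it is even.

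The main obstacle is the square-class step in part ii), namely showing that $\mathrm{diag}(z,z)\simeq I_2$ and that $1$ and $z$ are genuinely inequivalent. The counting argument and the determinant invariant handle these two points.
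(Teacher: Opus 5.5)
Your proof is correct. The paper gives no proof of this proposition; it quotes the result from Dickson. The only related reasoning in the paper is the proof of Lemma \ref{qec}~ii). That proof uses your splitting step: congruence $\begin{bmatrix}1&\alpha\\ \alpha^T&G'\end{bmatrix}\simeq \mathrm{diag}(1,G'-\alpha^T\alpha)$, followed by repetition. So you are supplying a self-contained version of the classical argument that the paper takes on citation. Citing keeps the paper short. Your route has the advantage of being constructive: it is exactly the congruence-diagonalization procedure that the ``Precompute''/``Compute'' steps of Algorithms 1--3 rely on, which the paper instead sends to an external reference.

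Two small points should be tightened.

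In the key lemma of part ii), splitting off $v=(x,y)$ with $z(x^2+y^2)=1$ gives $\mathrm{diag}(1,d)$, not literally $\mathrm{diag}(1,z\cdot z)$. Taking determinants gives $d=(z\det D)^2$, a nonzero square, so your conclusion is right. More directly, $w=(-y,x)$ satisfies $w\,\mathrm{diag}(z,z)\,v^T=0$ and $w\,\mathrm{diag}(z,z)\,w^T=1$. Hence $\mathrm{diag}(z,z)\simeq I_2$ exactly, with change-of-basis determinant $x^2+y^2=z^{-1}\neq 0$.

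In part iii), the congruence is via the transpose. The induction needs the complement $A'$ to be alternating again, and you should say why. For every $x$,
$xAx^T=\sum_k a_{kk}x_k^2+\sum_{k<l}(a_{kl}+a_{lk})x_kx_l=0$.
In characteristic $2$ this uses the zero-diagonal condition, not just $A^T=-A$. The vanishing of the column entries, once the row entries are cleared, follows from $A^T=-A$.
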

	
	\begin{lemma}\label{mgtml}
		Let $\C$ be an $[n,k]$ code over $\F_q$ with $\ell =\dim(\text{Hull}(\C))$ and $\ell < t \leq k$. If the length of the shortest $t$-dimensional hull embeddings of $\C$ is $n+s$, then $s \geq t-\ell$.
	\end{lemma}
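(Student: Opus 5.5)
Write $\widetilde{\C}$ for a shortest $t$-dimensional hull embedding of length $n+s$. The plan mirrors the rank inequality \eqref{lmt} in the proof of Theorem \ref{nlt}, but bounds the rank from below instead of from above. As there, after permuting coordinates $\widetilde{\C}$ has a generator matrix $\widetilde{G}=[G,D]$, where $G$ generates $\C$ (puncturing on $S$ recovers $\C$ and keeps dimension $k$) and $D$ is a $k\times s$ matrix. By Theorem \ref{rahu},
$$
\text{rank}(\widetilde{G}\widetilde{G}^*)=k-t \quad\text{and}\quad \text{rank}(GG^*)=k-\ell .
$$

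The key step is subadditivity of rank applied to $GG^*=\widetilde{G}\widetilde{G}^*-DD^*$:
$$
k-\ell=\text{rank}(GG^*)\leq \text{rank}(\widetilde{G}\widetilde{G}^*)+\text{rank}(DD^*)\leq k-t+\text{rank}(D)\leq k-t+s .
$$
Here $\text{rank}(DD^*)\leq\text{rank}(D)\leq s$. This holds in both the Euclidean and Hermitian cases, since $D^*$ is $D^T$ with entries possibly raised to the $\sqrt{q}$-th power, and that map is a field automorphism applied entrywise, so it does not change rank. Rearranging gives $s\geq t-\ell$.

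The only point needing care is the justification that $\widetilde{G}$ can be taken in the block form $[G,D]$ with $G$ generating $\C$ itself. Puncturing $\widetilde{\C}$ on $S$ gives $\C$, which has dimension $k$. Hence the restriction of any generator matrix of $\widetilde{\C}$ to the complement of $S$ is a $k\times n$ matrix of full row rank generating $\C$. After a column permutation, which does not affect hull dimension, and a change of basis on the left, which preserves the rank of the Gram matrix, we may assume this restriction equals $G$. No step here is a real obstacle; the argument is a direct reversal of \eqref{lmt}.
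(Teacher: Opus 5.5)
Your proof is correct and is essentially the paper's own argument. The paper writes the same rank inequality as $k-t=\text{rank}(GG^*+DD^*)\geq \text{rank}(GG^*)-\text{rank}(D)\geq k-\ell-s$, which is your subadditivity bound rearranged. Your justification that $\widetilde{G}$ may be taken in the form $[G,D]$ fills in a step the paper simply asserts.
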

	\begin{proof}
		Let $G$ be a generator matrix of $\C$ and let $\widetilde{\C}$ be a shortest $t$-dimensional hull embedding of $\C$. There exists a $k \times s$ matrix $D$ such that $\widetilde{G}=[G,D]$ generates $\widetilde{\C}$. Since
		\begin{equation}
			\begin{aligned}
				k-t=\text{rank}(\widetilde{G}\widetilde{G}^*)&=\text{rank}(GG^*+DD^*)\\
				&\geq \text{rank}(GG^*)-\text{rank}(DD^*)\\
				&\geq \text{rank}(GG^*)-\text{rank}(D)\\
				&\geq k-\ell-s,
			\end{aligned}\label{rageq}
		\end{equation}
		rearranging the inequality gives $s \geq t-\ell$.
	\end{proof}
	The conclusion of Lemma \ref{mgtml} is valid for both Euclidean and Hermitian cases. To find the exact minimal, a case-by-case analysis is required. We first deal with the Hermitian case.
	\begin{theorem}\label{herl}
		Let $\C$ be an $[n,k]$ code over $\F_{q^2}$ with $\ell =\dim(\text{Hull}(\C))$ and $\ell < t \leq k$. Then the length of the shortest $t$-dimensional \textbf{Hermitian} hull embeddings of $\C$ is $n+t-\ell$.
	\end{theorem}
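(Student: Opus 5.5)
By Lemma \ref{mgtml}, any $t$-dimensional Hermitian hull embedding of $\C$ has length at least $n+t-\ell$. It therefore suffices to construct one of length exactly $n+t-\ell$, i.e.\ to find a $k\times(t-\ell)$ matrix $D$ with $\text{rank}(GG^\dagger+DD^\dagger)=k-t$ for a suitable generator matrix $G$ of $\C$.

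First normalize the Gram matrix. $GG^\dagger$ is a $k\times k$ Hermitian matrix of rank $k-\ell$ by Theorem \ref{rahu}. By Proposition \ref{acon} i), there is an invertible $P$ with
$$PGG^\dagger P^\dagger=\begin{bmatrix} I_{k-\ell}&\\ &O\end{bmatrix}.$$
Replacing $G$ by $G'=PG$ changes neither the code nor the puncturing relation, so we may assume $G'G'^\dagger$ has this form. We now need to cancel $t-\ell$ of the diagonal ones using $t-\ell$ extra columns.

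The key point, and the only real obstacle, is cancelling a single diagonal $1$ with a single column. We want $d\in\F_{q^2}$ with $dd^{q}=d^{q+1}=-1$. The norm map $\F_{q^2}^*\to\F_q^*$, $x\mapsto x^{q+1}$, is surjective, and $-1\in\F_q^*$, so such a $d$ exists. In characteristic $2$ we can simply take $d=1$.

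Take $D$ to be the $k\times(t-\ell)$ matrix whose first $t-\ell$ rows form $dI_{t-\ell}$ and whose remaining rows are zero. This is valid because $t-\ell\le k-\ell$. Then $DD^\dagger=\text{diag}(-I_{t-\ell},O)$, so
$$[G',D][G',D]^\dagger=\text{diag}(O_{t-\ell},I_{k-t},O_\ell),$$
which has rank $k-t$. By Theorem \ref{rahu}, the code generated by $[G',D]$ has Hermitian hull dimension $t$. It has dimension $k$ because $G'$ already has full row rank, and puncturing the last $t-\ell$ coordinates recovers $\C$. Its length is $n+t-\ell$, which matches the lower bound and completes the proof.

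Everything beyond the norm-surjectivity step is routine bookkeeping. The Euclidean analogue will be harder precisely because $-1$ need not be a square, so a single column $d$ with $d^2=-1$ may not exist.
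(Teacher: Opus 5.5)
Your proof is correct and is essentially the paper's argument. The paper writes $GG^\dagger = H\,\mathrm{diag}(I_{k-\ell},O)H^\dagger$ and appends $D = H\begin{bmatrix} aI_{t-\ell}\\ O\end{bmatrix}$ with $a^{q+1}=-1$, which is your construction up to left-multiplying the generator matrix by $H^{-1}$. Your explicit appeal to Lemma \ref{mgtml} for the lower bound and your norm-surjectivity argument for the existence of $a$ spell out two points the paper leaves implicit.
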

	\begin{proof}
		Let $G$ be a generator matrix of $\C$. Then $GG^\dagger$ is a Hermitian matrix and $\text{rank}(GG^{\dagger})=k-\ell$ by Theorem \ref{rahu}. According to Proposition \ref{acon} i), there exists a $k\times k$ invertible matrix $H$ such that
		$$GG^\dagger=H\begin{bmatrix} I_{k-\ell} & \\  & O \end{bmatrix}H^\dagger.$$
		Let us take a $k\times (t-\ell)$ matrix
		$$
		D=H\begin{bmatrix} aI_{t-\ell} \\  O\end{bmatrix}
		$$
		with $a^{q+1}=-1$. Clearly, $\text{rank}([G,D][G,D]^\dagger)=k-t$ and it follows that $[G,D]$ generates a $t$-dimensional Hermitian hull embedding of $\C$.
	\end{proof}
	We next study the Euclidean hull embeddings.
	\begin{lemma}\label{qec}
		Let $q=2^m$ and let $A$ be an $n\times n$ symmetric matrix over $\F_q$ of rank $r$.
		\begin{itemize}
			\item [i)]If $A$ is alternating, then $r$ is even and
			$$A \simeq \text{diag}(\underbrace{J,\dots,J}_{\frac{r}{2}},0,\dots,0), $$\\[-1em]
			where $J=\begin{bmatrix} 0 & 1 \\ 1 & 0 \end{bmatrix}$.
			\item [ii)]If $A$ is non-alternating, then
			\begin{equation*}
				\begin{aligned}
					A\simeq \begin{bmatrix} I_r  & \\ & O
					\end{bmatrix} .
				\end{aligned}
			\end{equation*}
		\end{itemize}
		
	\end{lemma}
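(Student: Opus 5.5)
Part i) follows from Proposition \ref{acon} iii). In characteristic $2$ we have $-1=1$, so an alternating matrix is exactly a symmetric matrix with zero diagonal. Thus $A$ is alternating in the sense of that proposition, and it is congruent to $\mathrm{diag}(J,\dots,J,0,\dots,0)$ with $J=\begin{bmatrix}0&1\\-1&0\end{bmatrix}=\begin{bmatrix}0&1\\1&0\end{bmatrix}$, and $r$ is even. If a self-contained argument is preferred, I would run the standard symplectic-basis induction. Pick $x,y$ with $x^TAy=1$, which is possible since $A\neq O$. The plane $\langle x,y\rangle$ carries Gram matrix $J$, because $x^TAx=y^TAy=0$. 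Split off its orthogonal complement and induct.

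For part ii) I would view $A$ as the Gram matrix of a symmetric bilinear form $B$ on $V=\F_q^n$. First I split off the radical: $A\simeq \mathrm{diag}(A',O)$ with $A'$ an $r\times r$ nonsingular symmetric matrix. Congruence preserves whether a matrix is alternating, because $x^TAx=0$ for all $x$ is the same as having zero diagonal in characteristic $2$. Hence $A'$ is still non-alternating, and it suffices to show that a nonsingular non-alternating symmetric $r\times r$ matrix is congruent to $I_r$.

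The key fact is that every element of $\F_{2^m}$ is a square. So if $B(v,v)=c\neq 0$, the rescaled vector $c^{-1/2}v$ has norm $1$. My plan is as follows.

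\begin{itemize}
\item[(a)] Pick $v$ with $B(v,v)\neq0$ and rescale it so that $B(v,v)=1$.
\item[(b)] Write $V=\langle v\rangle\perp v^\perp$; the summand $v^\perp$ is nondegenerate.
\item[(c)] If the restriction to $v^\perp$ is again non-alternating, induct.
\item[(d)] If the restriction is alternating, part i) gives $v^\perp\simeq \mathrm{diag}(J,\dots,J)$. Then it is enough to show $\mathrm{diag}(1,J)\simeq I_3$ and apply this repeatedly to absorb each $J$ block.
\end{itemize}

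For step (d), take the basis $e_0,e_1,e_2$ with $B(e_0,e_0)=1$, $B(e_1,e_2)=1$, and all other pairings zero. Set
$$f_1=e_0+e_1,\qquad f_2=e_0+e_2,\qquad f_3=e_0+e_1+e_2 .$$
A direct check in characteristic $2$ gives $B(f_i,f_i)=1$ and $B(f_i,f_j)=0$ for $i\neq j$. For instance,
$$B(f_3,f_3)=1+2B(e_1,e_2)+\cdots=1,\qquad B(f_1,f_2)=1+B(e_1,e_2)=0,$$
and the $f_i$ are linearly independent. This small explicit change of basis is the main obstacle; everything else is routine induction. After applying it, the resulting diagonal entries are all $1$, so $A'\simeq I_r$ and therefore $A\simeq \begin{bmatrix} I_r&\\&O\end{bmatrix}$.
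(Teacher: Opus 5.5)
Your proof is correct and follows essentially the same route as the paper. Both split off a norm-$1$ vector (the paper's reduction to $\mathrm{diag}(1,G'')$), recurse down to $\mathrm{diag}(I_j,J,\dots,J,0,\dots,0)$, and absorb the $J$-blocks via $\mathrm{diag}(1,J)\simeq I_3$; your vectors $f_3,f_1,f_2$ are exactly the rows of the paper's matrix $P$. The only differences are cosmetic: you split off the radical first, and you state explicitly that every element of $\F_{2^m}$ is a square, which the paper uses tacitly when normalizing the nonzero diagonal entry to $1$.
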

	\begin{proof}
		i) When $G$ is alternating, the conclusion is clear by Proposition \ref{acon} iii).
		
		ii) Now we suppose $G$ is non-alternating, which means there exists a non-zero element on the diagonal of $G$. Up to congruence, we have
		\begin{equation}
			\begin{aligned}
				G\simeq \begin{bmatrix} 1 & \alpha \\ \alpha^T & G' \end{bmatrix}\simeq \begin{bmatrix} 1 &  \\  & G'' \end{bmatrix},
			\end{aligned}\label{repe}
		\end{equation}
		where $G''=G'-\alpha^T\alpha$ is symmetric.\\
		\textbf{Case 1.} If $G''$ is alternating, then
		$$G'' \simeq \text{diag}(J,\dots,J,0,\dots,0), $$
		and
		$$
		G \simeq \text{diag}(1,J,\dots,J,0,\dots,0).
		$$
		\textbf{Case 2.} If $G''$ is non-alternating, repeat the steps in \eqref{repe}. In any case, we always have
		\begin{equation}
			G \simeq \text{diag}(I_j,J,\dots,J,0,\dots,0)\label{ijj}
		\end{equation}
		$$ $$
		and $j \geq 1$. Furthermore, denote by
		$$
		P=\begin{bmatrix} 1 & 1 & 1 \\ 1 & 1 & 0 \\ 1 & 0 & 1 \end{bmatrix}.
		$$
		Then $P$ is invertible and one can verify
		$$
		P \begin{bmatrix} 1& \\   & J  \end{bmatrix} P^T= I_3.
		$$
		Using this relationship, we can gradually reduce the $J$-blocks in \eqref{ijj} to the identity matrix. Thus, the desired result follows.
	\end{proof}
	Here we set $\F_q^*=\{a\in \F_q : a \neq 0\}$, ${\F_q^*}^2=\{a^2 : a \in \F_q^* \}$ and $|\F_q^*:{\F_q^*}^2|=2$. Recalling Proposition \ref{acon} ii) and Lemma \ref{qec}, we present the following classification of linear codes.
	\begin{definition}\label{types}
		Let $\C$ be an $[n,k]$ linear code over $\F_q$ with generator matrix $G$.
		\begin{itemize}
			\item [i)] Let $q$ be an odd prime power. Then $\C$ is said to be of type  \textbf{($\text{E}_{\text{o,s}}$)} if
			$$-GG^T \simeq \begin{bmatrix} I_r &  \\  & O \end{bmatrix};$$
			of type  \textbf{($\text{E}_{\text{o,ns}}$)} if
			$$-GG^T \simeq \begin{bmatrix} I_{r-1} &  &\\ &z& \\ & & O \end{bmatrix},$$
			where $z \in \F_q^* \setminus {\F_q^*}^2$.
			\item [ii)] Let $q=2^m$. Then $\C$ is said to be of type \textbf{($\text{E}_{\text{e,a}}$)} if $GG^T$ is alternating; of type \textbf{($\text{E}_{\text{e,na}}$)} if $GG^T$ is non-alternating.
		\end{itemize}
		
	\end{definition}
	\begin{lemma}\label{equ}
		Let $q$ be an odd prime power, and let $G$ be an invertible symmetric $n\times n$ matrix over $\F_q$. Then there exists an $n\times n$ invertible matrix $B$ such that $G=BB^T$ if and only if $\det(G) \in {\F_q^*}^2 $.
	\end{lemma}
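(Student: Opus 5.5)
The plan is to prove the two directions separately, with the forward direction being a determinant computation and the converse reduced to Proposition \ref{acon} ii).

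\emph{Necessity.} Suppose $G=BB^T$ with $B$ invertible. Then $\det(G)=\det(B)\det(B^T)=\det(B)^2$. Since $B$ is invertible, $\det(B)\neq 0$, so $\det(G)\in{\F_q^*}^2$.

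\emph{Sufficiency.} Assume $\det(G)\in{\F_q^*}^2$. Because $G$ is symmetric and invertible, it has rank $n$, and Proposition \ref{acon} ii) leaves exactly two possibilities: either $G\simeq I_n$, or $G\simeq \text{diag}(I_{n-1},z)$ with $z\in\F_q^*\setminus{\F_q^*}^2$. Congruence means there is an invertible $P$ with $PGP^T=M$, where $M$ is one of these two canonical forms.

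Taking determinants gives
$$\det(M)=\det(P)^2\det(G).$$
So $\det(M)$ and $\det(G)$ lie in the same square class of $\F_q^*$. In the second case $\det(M)=z$ is a non-square. But $\det(G)$ is a square by hypothesis, so $\det(P)^2\det(G)$ is a square, which is a contradiction. This rules out the second case.

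Hence $PGP^T=I_n$. Setting $B=P^{-1}$ gives
$$G=P^{-1}(P^T)^{-1}=BB^T,$$
and $B$ is invertible, as required.

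The whole argument rests on Proposition \ref{acon} ii), which is the only nontrivial input. The single point that needs care is the square-class invariant: congruence multiplies the determinant by the square $\det(P)^2$. This uses $q$ odd only through the fact that $|\F_q^*:{\F_q^*}^2|=2$, so that squares and non-squares are genuinely distinct classes. Apart from that, the steps above are direct.
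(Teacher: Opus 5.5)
Your argument is correct and matches the paper's: necessity from $\det(G)=\det(B)^2$, and sufficiency by invoking Proposition \ref{acon} ii) and using the square class of the determinant to rule out the form $\text{diag}(I_{n-1},z)$, a step the paper leaves implicit. One small correction to your closing remark: the oddness of $q$ is needed mainly because Proposition \ref{acon} ii) itself requires it (in characteristic $2$ a symmetric matrix need not be diagonalizable by congruence), not merely for the index $|\F_q^*:{\F_q^*}^2|=2$.
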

	\begin{proof}
		The necessity is obvious. If $\det(G) \in {\F_q^*}^2 $, the sufficiency can be obtained by Proposition \ref{acon} ii).
	\end{proof}
	\begin{lemma}\label{appt}
		Let $q=2^m$. If there exists a $k \times s$ matrix $P$ over $\F_q$ such that $PP^T$ is alternating, then we have $\text{rank}(PP^T)\leq s-1$.
	\end{lemma}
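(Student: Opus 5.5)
The plan is to exploit the fact that in characteristic $2$ the diagonal of $PP^T$ is a Frobenius image of the row sums of $P$, because squaring is additive. Write $p_1,\dots,p_k\in\F_q^s$ for the rows of $P$. The $i$-th diagonal entry of $PP^T$ is
$$\langle p_i,p_i\rangle_E=\sum_{j=1}^s p_{ij}^2=\Big(\sum_{j=1}^s p_{ij}\Big)^2 .$$
Let $\mathbf{1}=(1,\dots,1)^T\in\F_q^s$ denote the all-ones column. Since squaring is injective on $\F_q$ when $q=2^m$, the condition that $PP^T$ is alternating, i.e.\ that every diagonal entry vanishes, is equivalent to $P\mathbf{1}=\mathbf{0}$. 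The off-diagonal symmetry condition is automatic for $PP^T$, and in characteristic $2$ we have $-1=1$, so only the diagonal matters.

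From $P\mathbf{1}=\mathbf{0}$ we conclude that the nonzero vector $\mathbf{1}$ lies in the right kernel of $P$. This kernel is a subspace of $\F_q^s$, so its dimension is at least $1$, and therefore $\text{rank}(P)\leq s-1$. Combining this with $\text{rank}(PP^T)\leq \text{rank}(P)$ gives $\text{rank}(PP^T)\leq s-1$, as required. No case distinction is needed.

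The only step that needs care is the identity $\sum_j p_{ij}^2=(\sum_j p_{ij})^2$. It holds because the Frobenius map $x\mapsto x^2$ is a ring homomorphism in characteristic $2$, and we then use injectivity of this map to pass from $(\sum_j p_{ij})^2=0$ to $\sum_j p_{ij}=0$. Everything else is routine linear algebra.
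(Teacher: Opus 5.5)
Your proof is correct and is essentially the paper's argument. The paper notes that in characteristic $2$ the vectors $u\in\F_q^s$ with $\langle u,u\rangle_E=0$ form exactly the hyperplane $\sum_i u_i=0$, so every row of $P$ lies in it; this is your observation $P\mathbf{1}=\mathbf{0}$, and the bound $\text{rank}(PP^T)\le\text{rank}(P)\le s-1$ then follows in the same way.
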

	\begin{proof}
		Consider the set $V=\{\bu\in \F_q^s: \left \langle \bu,\bu \right \rangle_E=0\}$. Actually
		$$V=\{(u_1,\dots,u_s)\in \F_q^s: \sum_{i=1}^{s}u_i=0\}, $$
		since $\text{char }\F_q=2$. Note that each row vector of the matrix \(P\) belongs to $V$. Then
		$$\text{rank}(PP^T) \leq \text{rank}(P) \leq \dim V= s-1.$$
	\end{proof}
	\begin{lemma}\label{eea}
		Let $\C$ be an $[n,k]$ code over $\F_{q}$ with $\ell =\dim(\text{Hull}(\C))$ and $\ell < t \leq k$. Suppose the length of the shortest $t$-dimensional \textbf{Euclidean} hull embeddings of $\C$ is $n+s$. Then we have $s \geq t-\ell+1$ if $\C$ is of type \textbf{($\text{E}_{\text{e,a}}$)} defined in Definition \ref{types} ii).
	\end{lemma}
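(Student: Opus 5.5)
We argue by contradiction. Suppose $s = t-\ell$; Lemma \ref{mgtml} already gives $s \geq t-\ell$, so this is the only case to rule out. Let $G$ be a generator matrix of $\C$ and write a shortest embedding as $\widetilde{G}=[G,D]$ with $D$ a $k\times s$ matrix, so that
$$k-t=\text{rank}(GG^T+DD^T).$$
Since $\C$ is of type ($\text{E}_{\text{e,a}}$), the matrix $GG^T$ is alternating of rank $k-\ell$.

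The key step is to show that $DD^T$ must also be alternating. If it is, Lemma \ref{appt} gives $\text{rank}(DD^T)\leq s-1$. Substituting this into the chain \eqref{rageq} gives
$$k-t\geq (k-\ell)-(s-1),$$
that is, $s\geq t-\ell+1$, which contradicts $s=t-\ell$.

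So the work lies in proving that $DD^T$ is alternating when $s=t-\ell$. We use the rank equality. Set $M=GG^T+DD^T$. Subadditivity of rank gives
$$\text{rank}(GG^T)\leq \text{rank}(M)+\text{rank}(DD^T),$$
so $\text{rank}(DD^T)\geq t-\ell=s$. Hence $\text{rank}(DD^T)=\text{rank}(D)=s$.

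Now pass to row spaces. The row space of $DD^T$ lies in the column space of $D$, which has dimension $s$, so the column space of $DD^T$ equals the column space $W$ of $D$. Because $\text{rank}(M)=\text{rank}(GG^T)-\text{rank}(DD^T)$, the column spaces of $GG^T$ and $DD^T$ must be related. Concretely, equality in $\text{rank}(X+Y)\geq \text{rank}(X)-\text{rank}(Y)$ forces $\text{col}(Y)\subseteq\text{col}(X)$; for symmetric matrices this gives $W\subseteq \text{col}(GG^T)$.

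It is cleaner to argue with bilinear forms instead. After a congruence, take $GG^T=\text{diag}(\mathcal{J},O)$ with $\mathcal{J}$ invertible alternating, and $D=\begin{bmatrix}D'\\ O\end{bmatrix}$, so that everything takes place inside the nondegenerate alternating block of size $k-\ell$. Then use the fact that $M=\mathcal{J}+D'D'^T$ has rank $(k-\ell)-s$, so $\mathcal{J}+D'D'^T$ has kernel of dimension $s$.

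I expect the main obstacle to be deducing from this that every row $\mathbf{d}$ of $D$ satisfies $\langle \mathbf{d},\mathbf{d}\rangle_E=0$. The first thing I would try is this. Take $x$ in $\ker(M)$. Since $x^T\mathcal{J}x=0$ and $x^TMx=0$, we get $x^TD'D'^Tx=(D'^Tx)\cdot(D'^Tx)=0$. Moreover, the map $x\mapsto D'^Tx$ is injective on the $s$-dimensional kernel, because $D'^Tx=0$ would give $\mathcal{J}x=0$ and hence $x=0$. So $D'^T$ maps $\ker M$ onto $\F_q^s$. Then every $y\in\F_q^s$ satisfies $y\cdot y=0$, which is impossible in characteristic $2$ since $e_1\cdot e_1=1$.

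This contradiction rules out $s=t-\ell$ directly, and it may even bypass Lemma \ref{appt}. If the reduction to the block form has gaps, I would fall back on the row-space argument above.
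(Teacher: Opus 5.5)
Your proof is correct, and its decisive step is exactly the paper's Case 2 argument: you send $\ker(GG^T+DD^T)$ into $\F_q^{s}$ via $x\mapsto D^Tx$, note that the image consists of isotropic vectors because $GG^T$ is alternating, and count dimensions. The differences are minor. You make the map injective by first reducing to the nondegenerate block, which your column-space inclusion does justify; the paper only needs that the kernel of the map lies in $\ker(GG^T)$, of dimension $\ell$, so that reduction is unnecessary. You also rightly observe that the argument never uses non-alternation of $DD^T$, so the paper's separate Case 1 via Lemma~\ref{appt} is redundant.
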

	\begin{proof}
		Let $G$ be a generator matrix of $\C$. W.L.O.G., set (note $\text{rank}(GG^T)= k-\ell$ is even)
		$$B=GG^T=\text{diag}(\underbrace{J,\dots,J}_{\frac{k-\ell}{2}},0,\dots,0),$$\\ [-1.3em]
		where $J=\begin{bmatrix} 0 & 1 \\ 1 & 0 \end{bmatrix}$.
		Combining Lemma \ref{mgtml}, we only need to prove that $\C$ has no $t$-dimensional Euclidean hull embedding with length $n+t-\ell$. Otherwise, there exists a $k \times (t-\ell)$ matrix $D$ such that
		$$\text{rank}([G,D][G,D]^T)=\text{rank}(B+DD^T)=k-t.$$
		According to the rank inequality,
		$$
		t-\ell \geq \text{rank}(D) \geq \text{rank}(DD^T) \geq \text{rank}(B)- \text{rank}(B+DD^T)=t-\ell,
		$$
		which implies $\text{rank}(DD^T)=\text{rank}(D)=t-\ell $.
		
		\textbf{case 1.} $DD^T$ is alternating. A contradiction can be derived from Lemma \ref{appt}.
		
		\textbf{case 2.} $DD^T$ is non-alternating. Set the following spaces
		$$
		\begin{aligned}
			U&=\{ \bu\in \F_q^k: (B+DD^T)\bu=0\},\\
			V&=\{ \bu\in \F_q^k: D^T\bu=0\},\\
			W&=\{ \bu\in \F_q^k: B\bu=0\}.
		\end{aligned}
		$$
		Consider the linear map $\varphi: U \to \F_q^{(t-\ell)}$, $\varphi(\bu)=D^T\bu$ for all $\bu \in U $. Since $B$ is alternating,
		$$\text{im } \varphi \subseteq \{ \bu\in \F_q^{(t-\ell)}: \left \langle \bu,\bu\right \rangle_E=0\}. $$
		One can derive $\dim(\text{im } \varphi) \leq t-\ell-1$. Furthermore,
		we can verify
		\begin{equation}
			\ker \varphi=V \cap W.\label{cap}
		\end{equation}
		Then, according to the Rank-Nullity theorem, we have
		\begin{equation*}
			\dim(\text{im } \varphi)+\dim(\ker \varphi)= \dim (U).
		\end{equation*}
		Combining with $\dim (U)=k-\text{rank}(B+DD^T)=t$, we derive
		$$
		\begin{aligned}
			\dim(\ker \varphi) &= \dim (U)-\dim(\text{im } \varphi)\\
			&\geq t -(t-\ell-1)\\
			&=\ell+1.
		\end{aligned}
		$$
		By \eqref{cap}, $\ell+1 \leq \dim(\ker \varphi) \leq \dim (W)=\ell$, which is a contradiction. Thus, the desired conclusion holds.
	\end{proof}
	\begin{lemma}\label{constr}
		For $q=2^m$ and any $r \in \mathbb{N}^+$, there always exists a $2r \times (2r+1)$ matrix $P$ over $\F_q$ such that
		$$
		PP^T=\text{diag}(\underbrace{J,\dots,J}_{r}),
		$$\\ [-1em]
		where $J=\begin{bmatrix}
			0&1\\
			1&0
		\end{bmatrix}.$
	\end{lemma}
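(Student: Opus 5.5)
The plan is to realize the rows of $P$ as a symplectic basis of a suitable subspace of $\F_q^{2r+1}$, so that $PP^T$ is exactly the Gram matrix of that basis. The point is that in characteristic $2$ we have $\langle \bu,\bu\rangle_E=\big(\sum_i u_i\big)^2$. So, as already used in Lemma \ref{appt}, the Euclidean form restricted to
$$V=\{(u_1,\dots,u_{2r+1})\in\F_q^{2r+1}:\ u_1+\dots+u_{2r+1}=0\}$$
is alternating. Moreover $\dim V=2r$, which is exactly the number of rows we need.

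The key steps, in order, are as follows.
\begin{itemize}
\item[(1)] Show that the restriction of $\langle\cdot,\cdot\rangle_E$ to $V$ is nondegenerate. A vector of $V$ orthogonal to all of $V$ lies in $V^{\bot_E}$, which is spanned by the all-one vector $\mathbf{1}$. But $\mathbf{1}\notin V$, since $\sum_i 1=2r+1=1\neq 0$ in characteristic $2$. Hence the radical of the form on $V$ is $\{\textbf{0}\}$.
\item[(2)] Fix any basis $\bv_1,\dots,\bv_{2r}$ of $V$ and let $M$ be the $2r\times 2r$ matrix with these rows. Then $MM^T$ is an alternating symmetric matrix, and by step (1) it has full rank $2r$.
\item[(3)] Apply Lemma \ref{qec} i), or equivalently Proposition \ref{acon} iii) with $-1=1$. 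This gives an invertible $2r\times 2r$ matrix $Q$ such that $Q(MM^T)Q^T=\text{diag}(J,\dots,J)$ with $r$ blocks.
\item[(4)] Set $P=QM$. This is a $2r\times(2r+1)$ matrix with $PP^T=\text{diag}(J,\dots,J)$, as required.
\end{itemize}

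The only step needing care is the nondegeneracy in step (1). This is precisely where the odd length $2r+1$ is used: in even length the vector $\mathbf{1}$ would lie in $V$ and become a radical vector. Everything else is a direct appeal to the canonical form of alternating matrices already established in the paper.

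If an explicit matrix is preferred, I would also check the base case $r=1$ directly. Take
$$P=\begin{bmatrix}1&1&0\\0&1&1\end{bmatrix}.$$
Both rows have even weight, so each row has inner product $0$ with itself, and the two rows have inner product $1$ with each other; hence $PP^T=J$. For general $r$, the simple block-diagonal extension of this matrix would have length $3r$, which is too long. This is why I would rely on the basis-free argument above, via the symplectic basis of $V$, rather than on such a block construction.
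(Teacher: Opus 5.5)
Your argument is correct, but it takes a different route from the paper. The paper proves the lemma by explicit induction. It starts from $P_1=\begin{bmatrix}1&1&0\\1&0&1\end{bmatrix}$, sets $P_{r+1}=\begin{bmatrix}P_r&O\\ \mathbf{1}_{2,2r+1}&J\end{bmatrix}$, and carries along the auxiliary invariant $P_r\mathbf{1}_{2r+1,2}=O$. This invariant makes the off-diagonal blocks of $P_{r+1}P_{r+1}^T$ vanish, and the new diagonal block is $(2r+1)\mathbf{1}_{2,2}+J^2=\mathbf{1}_{2,2}+I_2=J$. The invariant says exactly that the rows of $P_r$ lie in your hyperplane $V$, so the paper's matrices are one particular symplectic basis of $V$, written down by hand.

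Your proof instead identifies the structure behind this: in characteristic $2$ the form on $V$ is alternating, the odd length makes it nondegenerate because $\mathbf{1}\notin V$, and Proposition~\ref{acon}~iii) finishes. This buys conceptual clarity and a description of all solutions. Any $P$ with $PP^T=\mathrm{diag}(J,\dots,J)$ has rows in $V$ (the diagonal of $PP^T$ is zero) and rank $2r$, so its rows are necessarily a symplectic basis of $V$. Your remark about even length, or Lemma~\ref{appt}, shows that $2r+1$ columns is the minimum.

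The paper's approach is self-contained: it needs no appeal to the classification of alternating forms. It also gives an explicit recursive formula with no diagonalization step, which is exactly what the while-loop of Algorithm 3 implements. To make your proof algorithmic, you would have to run a symplectic Gram--Schmidt procedure on a basis of $V$.

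One small slip: in step (2), $M$ is $2r\times(2r+1)$, not $2r\times 2r$; only $MM^T$ is square.
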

	
	\begin{proof}
		We proceed by induction on $r$. The $i\times j$ all-ones matrix is denoted by $\textbf{1}_{i,j}$. When $r=1$, take the $2\times 3$ matrix
		$$
		P_1=\begin{bmatrix}
			1&1&0\\
			1&0&1
		\end{bmatrix}.
		$$
		Then $P_1P_1^T=J$ and $P_1\textbf{1}_{3,2}=O$. Suppose there exists a $2r \times(2r+1)$ matrix $P_r$ such that
		$$
		P_rP_r^T=\text{diag}(\underbrace{J,\dots,J}_{r}) \  \text{ and }\  P_r\textbf{1}_{2r+1,2}=O.
		$$
		Now taking the $(2r+2) \times(2r+3)$ matrix
		$$
		P_{r+1}=\begin{bmatrix}
			P_r&O\\
			\textbf{1}_{2,2r+1}&J
		\end{bmatrix},
		$$
		we have
		$$
		P_{r+1}P_{r+1}^T=\begin{bmatrix}
			P_rP_r^T&P_r\textbf{1}_{2r+1,2}\\
			P_r^T\textbf{1}_{2,2r+1}&\textbf{1}_{2,2r+1}\textbf{1}_{2r+1,2}+J^2
		\end{bmatrix}=\text{diag}(\underbrace{J,\dots,J}_{r+1}),
		$$
		and
		$$
		P_{r+1}\textbf{1}_{2r+3,2}=\begin{bmatrix}
			P_r&O\\
			\textbf{1}_{2,2r+1}&J
		\end{bmatrix}
		\begin{bmatrix}
			\textbf{1}_{2r+1,2}\\
			\textbf{1}_{2,2}
		\end{bmatrix}=O.
		$$
		Thus, the conclusion follows by induction.
	\end{proof}
	\begin{theorem}\label{sel}
		Let $\C$ be an $[n,k]$ code over $\F_{q}$ with $\ell =\dim(\text{Hull}(\C))$ and $\ell < t \leq k$. If the length of the shortest $t$-dimensional \textbf{Euclidean} hull embeddings of $\C$ is $n+s$, then:
		
		i) For the case where $q$ is an odd prime power,
		\begin{equation}
			s =
			\begin{cases}
				t-\ell,&\text{$\C$ is of type  \textbf{($\text{E}_{\text{o,s}}$)}}, \\
				t-\ell,&\text{$\C$ is of type  \textbf{($\text{E}_{\text{o,ns}}$)} and } \ell \leq t\leq k-1, \\
				k-\ell+1,&\text{$\C$ is of type \textbf{($\text{E}_{\text{o,ns}}$)} and }t=k.
			\end{cases}\label{qodd}
		\end{equation}
		
		ii) For the case where $q=2^m$,
		\begin{equation}
			s =
			\begin{cases}
				t-\ell,&\text{$\C$ is of type \textbf{($\text{E}_{\text{e,na}}$)}}, \\
				t-\ell+1,&\text{$\C$ is of type \textbf{($\text{E}_{\text{e,a}}$)}}.
			\end{cases}\label{qeven}
		\end{equation}
	\end{theorem}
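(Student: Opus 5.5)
We obtain each value of $s$ from a matching lower bound and an explicit construction. Throughout, let $G$ be a generator matrix of $\C$ and $r=\text{rank}(GG^T)=k-\ell$. Any embedding of length $n+s$ has a generator matrix $[G,D]$ with $D$ of size $k\times s$. By Theorem \ref{rahu} it is $t$-dimensional exactly when $\text{rank}(GG^T+DD^T)=k-t$. If $H$ is invertible and $D=H D'$, then $GG^T+DD^T = H(H^{-1}GG^TH^{-T}+D'D'^T)H^T$. So we may replace $GG^T$ by any congruent canonical form from Proposition \ref{acon} ii) or Lemma \ref{qec}, and we work with $D'$ in place of $D$.

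\textbf{Lower bounds.} Lemma \ref{mgtml} gives $s\ge t-\ell$ in every case, and Lemma \ref{eea} gives $s\ge t-\ell+1$ for type ($\text{E}_{\text{e,a}}$). It remains to treat type ($\text{E}_{\text{o,ns}}$) with $t=k$. Suppose $s=k-\ell=r$. Then $GG^T+DD^T=O$, so $-GG^T=DD^T$, and $\text{rank}(D)\ge\text{rank}(DD^T)=r$ forces $D$ to have full column rank $r$. Choose an invertible $Q$ with $QD=\begin{bmatrix}B\\O\end{bmatrix}$, where $B$ is an invertible $r\times r$ matrix. Then
\[
-QGG^TQ^T=\text{diag}(BB^T,O).
\]
The block $BB^T$ has square determinant, so by Lemma \ref{equ} and Proposition \ref{acon} ii) it is congruent to $I_r$. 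Hence $-GG^T\simeq\text{diag}(I_r,O)$. This contradicts type ($\text{E}_{\text{o,ns}}$), because the two canonical forms are not congruent. So $s\ge k-\ell+1$ in this case.

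\textbf{Constructions, $q$ odd.} Write $-GG^T=H\,\text{diag}(\Lambda,O)H^T$, where $\Lambda=I_r$ for type ($\text{E}_{\text{o,s}}$) and $\Lambda=\text{diag}(I_{r-1},z)$ for type ($\text{E}_{\text{o,ns}}$).
\begin{itemize}
\item Type ($\text{E}_{\text{o,s}}$), or type ($\text{E}_{\text{o,ns}}$) with $t\le k-1$: take $D=H\begin{bmatrix}I_{t-\ell}\\O\end{bmatrix}$. This cancels the first $t-\ell$ ones of $\Lambda$. In the ($\text{E}_{\text{o,ns}}$) case we have $t-\ell\le r-1$, so the entry $z$ is never touched. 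The rank drops to $r-(t-\ell)=k-t$, as required.
\item Type ($\text{E}_{\text{o,ns}}$) with $t=k$: every element of $\F_q$ is a sum of two squares, so write $z=a^2+b^2$. Set
\[
B=\begin{bmatrix}I_{r-1}&\mathbf 0&\mathbf 0\\ \mathbf 0&a&b\end{bmatrix}
\]
and $D=H\begin{bmatrix}B\\O\end{bmatrix}$. Then $DD^T=-GG^T$, so $GG^T+DD^T=O$, and $D$ has $r+1=k-\ell+1$ columns.
\end{itemize}

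\textbf{Constructions, $q=2^m$.}
\begin{itemize}
\item Type ($\text{E}_{\text{e,na}}$): Lemma \ref{qec} ii) gives $GG^T=H\,\text{diag}(I_r,O)H^T$. Take $D=H\begin{bmatrix}I_{t-\ell}\\O\end{bmatrix}$; in characteristic $2$ the identity entries cancel, leaving rank $k-t$.
\item Type ($\text{E}_{\text{e,a}}$): this is the main obstacle, since a naive diagonal $D$ cannot reduce rank inside an alternating block. Write $GG^T=H\,\text{diag}(J,\dots,J,O)H^T$ with $r/2$ blocks $J$.
\begin{itemize}
\item If $t-\ell=2j$ is even, take the $2j\times(2j+1)$ matrix $P$ of Lemma \ref{constr} and set $D=H\begin{bmatrix}P\\O\end{bmatrix}$. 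This kills exactly $j$ of the $J$-blocks.
\item If $t-\ell=2j+1$ is odd, then $2j+2\le r$ because $r$ is even. Consider the $(2j+2)\times(2j+2)$ matrix $M'=\text{diag}(J^{\,j+1})+e_1e_1^T$. It is symmetric, non-alternating and of full rank $2j+2$. By Lemma \ref{qec} ii), $M'\simeq I_{2j+2}$, so $M'=MM^T$ for some invertible $M$. Set $D=H\begin{bmatrix}M\\O\end{bmatrix}$. The first $2j+2$ coordinates of $GG^T+DD^T$ then collapse to $e_1e_1^T$, which has rank $1$. Thus the total rank drops by exactly $2j+1=t-\ell$, using $t-\ell+1$ columns.
\end{itemize}
\end{itemize}
Combining these constructions with the lower bounds gives every value of $s$ in \eqref{qodd} and \eqref{qeven}.
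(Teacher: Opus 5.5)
Your proposal is correct and follows the paper's proof essentially step for step. It uses the same lower bounds (Lemmas \ref{mgtml} and \ref{eea}), the same canonical-form constructions, the same sum-of-two-squares trick for type ($\text{E}_{\text{o,ns}}$) with $t=k$, and Lemma \ref{constr} for the even ($\text{E}_{\text{e,a}}$) case. The differences are cosmetic. For the ($\text{E}_{\text{o,ns}}$), $t=k$ lower bound, you row-reduce $D$ to $\begin{bmatrix}B\\O\end{bmatrix}$ and invoke non-congruence of the two canonical forms, where the paper uses a determinant argument on the block $D_2$ of a hull-adapted generator matrix. In the odd ($\text{E}_{\text{e,a}}$) case, you use $\text{diag}(J,\dots,J)+e_1e_1^T\simeq I_{t-\ell+1}$, where the paper uses $\text{diag}(J,\dots,J,1)\simeq I_{t-\ell}$ plus an extra unit column.
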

	\begin{proof}
		Let $G$ be a generator matrix of $\C$. Our goal is to find the corresponding $k\times s $ matrix $D$ for each type such that
		\begin{equation}
			\text{rank}([G,D][G,D]^T)=k-t.\label{gdkt}
		\end{equation}
		Note that $\text{rank}(GG^T)=k-\ell > k-t$.
		
		(a) For type \textbf{($\text{E}_{\text{e,na}}$)} or \textbf{($\text{E}_{\text{o,s}}$)}, by Lemma \ref{qec} and Definition \ref{types}, there exists an invertible $k\times k$ matrix $B$ such that
		$$
		-GG^T = B\begin{bmatrix} I_{k-\ell}  & \\ & O
		\end{bmatrix}B^T.
		$$
		Therefore, we take the $k \times (t-\ell)$ matrix
		$$D =B\begin{bmatrix} I_{t-\ell}\\ O
		\end{bmatrix} $$
		satisfying \eqref{gdkt}.
		
		(b) For type \textbf{($\text{E}_{\text{o,ns}}$)} and $\ell < t \leq k-1$, according to Proposition \ref{acon} ii), there exists an invertible $k\times k$ matrix $C$ such that
		$$
		-GG^T = C\begin{bmatrix} I_{k-\ell-1} &  &\\ &z& \\ & & O \end{bmatrix}C^T,
		$$
		where $z \in \F_q^* \setminus {\F_q^*}^2$. Note that $t-\ell \leq k-\ell-1$. Then we can take
		$$D =C\begin{bmatrix} I_{t-\ell}\\ O
		\end{bmatrix}$$
		satisfying \eqref{gdkt}.
		
		(c) For type \textbf{($\text{E}_{\text{o,ns}}$)} and $t = k$, let $G'$ be a generator matrix of $\C$ of the form
		$$G'=\begin{bmatrix} G(\text{Hull}(\C))\\A\end{bmatrix},$$
		where $G(\text{Hull}(\C)) $ is a generator matrix of $\text{Hull}(\C)$. Note that $G$ and $G'$ are row equivalent. Therefore
		$$
		\begin{bmatrix} O & \\ & -AA^T  \end{bmatrix}=-G'G'^T \simeq-GG^T \simeq \begin{bmatrix} I_{k-\ell-1} &  &\\ &z& \\ & & O \end{bmatrix},
		$$
		where $z \in \F_q^* \setminus {\F_q^*}^2$ and one can verify
		\begin{equation}
			-AA^T \simeq \begin{bmatrix} I_{k-\ell-1} &  \\ & z \end{bmatrix}.\label{iz}
		\end{equation}
		Assume $s=k-\ell$.  Then there exists a $k\times (k-\ell) $ matrix $D$ such that
		$$
		[G',D][G',D]^T=O.
		$$
		Partition $D$ into blocks:
		$$
		D=\begin{bmatrix} D_1 \\ D_2\end{bmatrix},
		$$
		where $D_2$ is a $(k-\ell) \times (k-\ell) $ matrix. This means
		$$
		D_2D_2^T=-AA^T\simeq \begin{bmatrix} I_{k-\ell-1} &  \\ & z \end{bmatrix}
		$$
		by \eqref{iz}. Then $\det(D_2D_2^T) \in \F_q^* \setminus {\F_q^*}^2$ which contradicts Lemma \ref{equ}. Therefore, $s \geq k-\ell+1$. Consider the set
		$$S_z=\{z-a^2: a \in \F_q\}. $$
		We know $\#S_z=\#\{a^2:a \in \F_q\}=\frac{q+1}{2}$. Then,
		$$S_z \cap \{a^2:a \in \F_q\} \neq \emptyset,$$
		and there exists $z_1,z_2 \in \F_q$ such that $z=z_1^2+z_2^2$. By \eqref{iz}, there exists an invertible $(k-\ell)\times (k-\ell)$ matrix $F$ such that
		$$
		-AA^T=F\begin{bmatrix} I_{k-\ell-1} &  \\ & z \end{bmatrix} F^T.
		$$
		Now we take the $k\times(k-\ell+1)$ matrix
		$$
		D=\begin{bmatrix} O \\ D' \end{bmatrix}
		$$
		with the $(k-\ell)\times (k-\ell+1)$ matrix
		$$D'=F\begin{bmatrix} I_{k-\ell-1}& \textbf{0}&\textbf{0} \\ \textbf{0} & z_1&z_2 \end{bmatrix}.$$
		One can verify
		$$
		[G',D][G',D]^T= \begin{bmatrix}  O& \\ &AA^T+D'D'^T \end{bmatrix}=O,
		$$
		and it follows that $[G',D]$ generates a $k$-dimensional Euclidean hull embedding of $\C$.
		
		For type \textbf{($\text{E}_{\text{e,a}}$)}, by Lemma \ref{eea}, we only need to find a $k \times (t-\ell+1)$ matrix $D$ satisfying \eqref{gdkt}.
		
		(d) For type \textbf{($\text{E}_{\text{e,a}}$)} and $t-\ell$ is odd, W.L.O.G., set (note $\text{rank}(GG^T)= k-\ell$ is even)
		$$GG^T=  \text{diag}(\underbrace{J,\dots,J}_{\frac{k-\ell}{2}},0,\dots,0), $$
		where $J=\begin{bmatrix} 0 & 1 \\ 1 & 0 \end{bmatrix}$. It is clear from the proof of Lemma \ref{qec} that the $(t-\ell)\times (t-\ell) $ matrix
		$$M:=  \text{diag}(\underbrace{J,\dots ,J}_{\frac{t-\ell-1}{2}},1)\simeq I_{t-\ell}. $$
		Then there exists an invertible $(t-\ell)\times (t-\ell)$ matrix $D'$ such that
		$$
		M=D'D'^T.
		$$
		Now we take $k\times(t-\ell+1)$ matrix
		\begin{equation*}
			\begin{aligned}
				D= & \begin{bmatrix}
					D' &\textbf{0} \\
					\textbf{0}& 1 \\
					O&\textbf{0}
				\end{bmatrix}.
			\end{aligned}
		\end{equation*}
		It derives
		\begin{equation*}
			\begin{aligned}
				[G,D][G,D]^T&=\text{diag}(\underbrace{J,\dots,J}_{\frac{k-\ell}{2}},0,\dots,0)+\text{diag}(\underbrace{J,\dots,J}_{\frac{t-\ell-1}{2}},1,1,0,\dots,0)\\
				&=\text{diag}(\underbrace{0,\dots,0}_{t-\ell-1},J+I_2,\underbrace{J,\dots,J}_{\frac{k-t-1}{2}},0,\dots,0).
			\end{aligned}
		\end{equation*}
		Obviously, $\text{rank}([G,D][G,D]^T)=k-t$ which satisfies \eqref{gdkt}. Then $[G,D]$ generates a shortest $t$-dimensional hull embedding of $\C$.
		
		(e) For type \textbf{($\text{E}_{\text{e,a}}$)} and $t-\ell$ is even, a matrix $D$ satisfying \eqref{gdkt} can be constructed recursively according to Lemma \ref{constr}.
	\end{proof}
	\begin{remark}
		Notably, compared with the results of An et al. \cite{An2025} for binary linear codes, our conclusion in Theorem \ref{sel} provides a criterion for precisely determining the length of the shortest self-orthogonal embeddings, namely that the length depends only on whether the binary linear code is alternating.
	\end{remark}
	The following result corresponds to Corollary \ref{coror}.
	\begin{corollary}
		Let $\C$ be an $[n,k]$ code over $\F_q$ with $\ell =\dim(\text{Hull}(\C))$, $\ell < t \leq k$, and let $\widetilde{\C}$ be a shortest $t$-dimensional hull embedding of $\C$ with a generator matrix
		$$
		\widetilde{G}=\begin{bmatrix} G(\text{Hull}(\C))&D_1 \\A &D_2\end{bmatrix}.
		$$
		Then
		\begin{equation}
			\text{rank}(D_2) =
			\begin{cases}
				t-\ell, &\text{in the Hermitian case,}\\
				t-\ell,&\text{$\C$ is of type \textbf{($\text{E}_{\text{o,s}}$)}, \textbf{($\text{E}_{\text{o,ns}}$)}, or \textbf{($\text{E}_{\text{e,na}}$)}}, \\
				t-\ell \text{ or }t-\ell+1,&\text{$\C$ is of type \textbf{($\text{E}_{\text{e,a}}$)}}.
			\end{cases}\label{rd2}
		\end{equation}
		
	\end{corollary}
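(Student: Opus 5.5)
The plan is to mirror the proof of Corollary \ref{coror}: bound $\text{rank}(D_2)$ from below using the Gram matrix of $\widetilde{G}$, and from above using the size of $D_2$ together with the value of $s$ from Theorems \ref{herl} and \ref{sel}.

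Write $H=G(\text{Hull}(\C))$, so that $[H^T,A^T]^T$ is a generator matrix of $\C$ and $\widetilde{G}$ has $n+s$ columns, where $s$ is the minimal value given by Theorem \ref{herl} (Hermitian case) or Theorem \ref{sel} (Euclidean case). Every row of $H$ lies in $\C^\bot$, so $HH^*=O$ and $HA^*=O$. By Theorem \ref{rahu}, $\text{rank}(AA^*)=k-\ell$, and since $A$ is $(k-\ell)\times n$, the matrix $AA^*$ is an invertible $(k-\ell)\times(k-\ell)$ matrix. A direct computation then gives
$$\widetilde{G}\widetilde{G}^*=\begin{bmatrix} D_1D_1^* & D_1D_2^*\\ D_2D_1^* & AA^*+D_2D_2^*\end{bmatrix},$$
and Theorem \ref{rahu} gives $\text{rank}(\widetilde{G}\widetilde{G}^*)=k-t$.

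\emph{Lower bound.} The lower right block $AA^*+D_2D_2^*$ is a submatrix of $\widetilde{G}\widetilde{G}^*$, so its rank is at most $k-t$. Subadditivity of rank then yields
$$k-\ell=\text{rank}(AA^*)\le \text{rank}(AA^*+D_2D_2^*)+\text{rank}(D_2D_2^*)\le k-t+\text{rank}(D_2).$$
Hence $\text{rank}(D_2)\ge t-\ell$ in every case.

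\emph{Upper bound.} The matrix $D_2$ has size $(k-\ell)\times s$, so $\text{rank}(D_2)\le\min(s,k-\ell)$. We read off $s$ case by case.
\begin{itemize}
\item In the Hermitian case, and for types ($\text{E}_{\text{o,s}}$), ($\text{E}_{\text{e,na}}$), and ($\text{E}_{\text{o,ns}}$) with $t\le k-1$, we have $s=t-\ell$. Therefore $\text{rank}(D_2)=t-\ell$.
\item For type ($\text{E}_{\text{o,ns}}$) with $t=k$, we have $s=k-\ell+1$, which is larger than $t-\ell$. Here the row count of $D_2$ gives the bound: $\text{rank}(D_2)\le k-\ell=t-\ell$.
\item For type ($\text{E}_{\text{e,a}}$), we have $s=t-\ell+1$, so $\text{rank}(D_2)\in\{t-\ell,\,t-\ell+1\}$.
\end{itemize}
These three cases give exactly the three lines of \eqref{rd2}.

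The main point needing care is the ($\text{E}_{\text{o,ns}}$), $t=k$ case, where the column bound alone is not sharp and the row count of $D_2$ must be used instead. The remaining steps are routine rank inequalities.
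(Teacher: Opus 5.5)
Your proof is correct. Its overall structure matches the paper's: a lower bound $\text{rank}(D_2)\ge t-\ell$, an upper bound from the size of $D_2$ and the value of $s$, and the row count of $D_2$ in the ($\text{E}_{\text{o,ns}}$), $t=k$ case. The difference is in how you get the lower bound.

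The paper works with the left kernel $\widetilde V$ of $\widetilde G\widetilde G^*$. It takes the subspace $V_2$ of kernel vectors of the form $(\mathbf{0},\bu_2)$ and argues $\dim V_2\ge t-\ell$. It then uses the invertibility of $AA^*$ to show that $\bu_2\mapsto\bu_2D_2$ is injective on $V_2$, so $\dim V_2\le\text{rank}(D_2)$.

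You instead note that $AA^*+D_2D_2^*$ is a principal submatrix of $\widetilde G\widetilde G^*$ and apply rank subadditivity. This is shorter and needs only $\text{rank}(AA^*)=k-\ell$. It also avoids the paper's decomposition $\widetilde V=V_1\oplus V_2$, which the paper calls clear without argument. Only $\dim V_2\ge t-\ell$ is actually needed, and that follows by intersecting the $t$-dimensional $\widetilde V$ with the $(k-\ell)$-dimensional subspace $\{\mathbf{0}\}\times\F_q^{k-\ell}$. In exchange, the paper's kernel viewpoint shows concretely which hull vectors of $\widetilde{\C}$ account for the rank of $D_2$.

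For type ($\text{E}_{\text{e,a}}$), the paper also notes that both values are attained. Construction (d) in Theorem \ref{sel} gives $\text{rank}(D_2)=t-\ell+1$ when $t-\ell$ is odd, and construction (e) gives $t-\ell$ when it is even. Your argument proves the disjunction as stated but does not address this sharpness.
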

	\begin{proof}
		Recall Notation \ref{star}. Consider the following spaces
		$$
		\begin{aligned}
			V_1&=\{\bu=(\bu_1,\textbf{0})\in \F_q^k:\bu_1\in \F_q^{\ell},\ \bu\widetilde{G}\widetilde{G}^*=\textbf{0}\},\\
			V_2&=\{\bu=(\textbf{0},\bu_2) \in \F_q^k:\bu_2\in \F_q^{(k-\ell)},\ \bu\widetilde{G}\widetilde{G}^*=\textbf{0}\},\\
			\widetilde{V}&=\{\bu \in \F_q^k:\bu\widetilde{G}\widetilde{G}^*=\textbf{0}\}.
		\end{aligned}
		$$
		Clearly, $V_1 \cap V_2=\{\textbf{0}\}$ and $\widetilde{V}=V_1\oplus V_2 $. Hence
		\begin{equation}
			\dim V_2 =\dim(V_1\oplus V_2)-\dim V_1 \geq t-\ell.\label{vpg}
		\end{equation}
		Define a linear map $\phi: V_2 \to \F_q^m$, $\phi(\textbf{0},\bu_2)=\bu_2D_2$ for all $(\textbf{0},\bu_2) \in V_2$. We next prove that $\phi$ is surjective. Suppose $\phi(\textbf{0},\bu_2)=\bu_2D_2=\textbf{0}$ for $(\textbf{0},\bu_2) \in V_2 $; we have
		$$
		\textbf{0}=\bu_2(AA^*+D_2D_2^*)=\bu_2AA^*,
		$$
		since $(\textbf{0},\bu_2)\widetilde{G}\widetilde{G}^*=\textbf{0}$. Note that $AA^*$ is invertible. Hence $\bu_2=\textbf{0}$. Then
		\begin{equation}
			\dim V_2=\dim (\text{im} \phi) \leq \text{rank}(D_2).\label{vpl}
		\end{equation}
		Combining \eqref{vpg} and \ref{vpl}, we obtain
		$$t-\ell \leq \text{rank}(D_2)\leq  s \leq t-\ell+1.$$
		
		i) According to Theorems \ref{herl} and \ref{sel}, we get $\text{rank}(D_2)=t-\ell$ for the Hermitian case, type \textbf{($\text{E}_{\text{o,s}}$)} and \textbf{($\text{E}_{\text{e,na}}$)}.
		
		ii) When $\C$ is of type \textbf{($\text{E}_{\text{o,ns}}$)} and $\ell <t \leq k-1$, we also have $\text{rank}(D_2)=t-\ell$ by \eqref{qodd} in Theorem \ref{sel}. When $\C$ is of type \textbf{($\text{E}_{\text{o,ns}}$)} and $t = k$, note that $D_2$ has only $t-\ell$ rows. Hence $\text{rank}(D_2)=k-\ell$, i.e., $D_2$ is invertible.
		
		iii) For type \textbf{($\text{E}_{\text{e,a}}$)}, referring to the construction in Theorem \ref{sel}, we see that both cases in \eqref{rd2} can occur.
		
	\end{proof}
	
	\section{Construction algorithm and examples}
	 The classification and constructions of optimal codes have been explored in \cite{An2025,Kim2021,Li2008}. In fact, many optimal linear codes in the \textbf{MAGMA} Best Known Linear Code (BKLC) database\cite{Magma1994} are self-orthogonal. We now present construction algorithms for the shortest $t$-dimensional hull embeddings in both the Euclidean and Hermitian settings, along with illustrative examples.
	\subsection{Hermitian case}
	In this subsection, we present an algorithm for constructing a shortest $t$-dimensional Hermitian hull embedding of a $q^2$-ary linear code, followed by concrete examples.
	
	\begin{algorithm}
		\caption{Construction of a shortest $t$-dimensional Hermitian hull embedding of a $q^2$-ary linear code}
		\begin{algorithmic}[1]
			\State	Let $\C$ be an $[n,k]$ linear code over $\F_{q^2}$ with generator matrix $G$ and $0 \leq t \leq k$
			\State \textbf{Goal:} Find a $k \times (n+s)$ matrix $\widetilde{G}$ over $\F_{q^2}$ which generates a shortest $t$-dimensional Hermitian hull embedding of $\C$
			\Procedure{HermitianHullEmbedding}{$G$, $t$}
			\State $k \gets \text{the number of rows in matrix } G$
			\State $\ell \gets k-\text{rank}(GG^\dagger)$
			\State Precompute a $k\times k$ invertible matrix $P$ s.t. $GG^\dagger=P \text{diag}(I_{k-\ell},0,\dots,0) P^\dagger$
			\State $s \gets |t-\ell|$
			\If{$t <\ell$}
			\State  $D \gets P\begin{bmatrix} O_{(k-s) \times s} \\ I_s \end{bmatrix}$
			\Else
			\State Select $a\in \{x \in \F_{q^2}: x^{q+1}=-1\}$
			\State  $D \gets P \begin{bmatrix} a I_s \\ O_{(k-s) \times s} \end{bmatrix}$
			\EndIf
			\State $\widetilde{G} \gets [G\mid D]$
			\State \Return $\widetilde{G}$
			\EndProcedure
		\end{algorithmic}
	\end{algorithm}
	
	We denote the $q$-ary $[\frac{q^r-1}{q-1},\frac{q^r-1}{q-1}-r,3]$ Hamming code by $\mathcal{H}_{q,r}$. We give the following two examples to demonstrate how to apply Algorithm 1.
	\begin{example}\label{634}
		Let $\zeta$ be a primitive element of $\F_4 $ ($\zeta^2=\zeta+1$) and let $\mathcal{H}_{4,2}$ be the $[5,3,3]_4$ Hamming code. There are two matrices
		$$
		G=\begin{bmatrix} 1&0&0&1&\zeta^2 \\
			0&1&0&\zeta^2&\zeta^2\\ 0&0&1&\zeta^2&1\end{bmatrix}\text{ and }
		P=\begin{bmatrix} 1&0&0 \\
			\zeta&1&0\\ 1&0&1\end{bmatrix}=[P_1,P_2,P_3],
		$$
		where $G$ is a generator matrix of $\mathcal{H}_{4,2}$ and $P$ is an invertible matrix such that $$GG^\dagger=P\text{diag}(1,0,0)P^\dagger .$$
		We have  $\dim(\text{Hull}(\mathcal{H}_{4,2}))=3-\text{rank}(GG^\dagger)=2$. Table 1 shows how the parameters vary after applying Algorithm 1 to obtain its shortest $t$-dimensional Hermitian hull embeddings (HSO stands for Hermitian self-orthogonal).
		\begin{table}[H]
			\centering
			\vspace{-0.2em}
			\renewcommand{\arraystretch}{1.1}
			\begin{tabular}{|c|c|c|}
				\hline
				$t$ & Generator matrix & $[n,k,d]$    \\
				\hline
				$0$ (LCD)&$[G,P_2,P_3]$&$[7,3,3]$ \\
				\hline
				$1$	 &$[G,P_3]$ & $[6,3,3]$    \\
				\hline
				$3$ (HSO)&$[G,P_1]$&$\textbf{[6,3,4]}$ \\
				\hline
			\end{tabular}
			\caption{Shortest $t$-dimensional Hermitian hull embeddings of $\mathcal{H}_{4,2}$}
		\end{table}
		Comparing with the BKLC database, the $[6,3,4]_4$ linear code generated by $[G,P_1]$ is optimal and MDS.
	\end{example}
	\begin{example}
		Let $w$ be a primitive element of $\F_9 $ ($w^2=w+1$). Let $\C_{8,4}$ be an $[8,4,5]_9$ MDS code. There are two matrices
		$$
		G=\begin{bmatrix} 1&0&0&0&2&1&w&w^2 \\
			0&1&0&0&1&w^6&2&w^7	\\
			0&0&1&0&w^5&w^7&w^7&w^7	\\
			0&0&0&1&w^5&1&2&w^6
		\end{bmatrix},
		$$
		and
		$$
		P=\begin{bmatrix} w^2&w^3&0&0 \\
			1&w^7&0&0	\\
			w^4&w^6&1&0	\\
			w^2&w^2&w^4&1
		\end{bmatrix}=[P_1,P_2,P_3,P_4],
		$$
		where $G$ is a generator matrix of $\C_{8,4}$ and $P$ is an invertible matrix such that $$GG^\dagger=P\text{diag}(1,1,1,0)P^\dagger.  $$
		We have $\dim(\text{Hull}(\C_{8,4}))=4-\text{rank}(GG^\dagger)=1$. Using Algorithm 1 with $w^4=-1$, we obtain Table 2.
		\begin{table}[H]
			\centering
			\vspace{-0.2em}
			\renewcommand{\arraystretch}{1.1}
			\begin{tabular}{|c|c|c|}
				\hline
				$t$&Generator matrix &$[n,k,d]$ \\
				\hline
				$0$ (LCD)&$[G,P_4]$ & $[9,4,5]$\\
				\hline
				$2$&$[G,wP_1]$ &$[9,4,5]$ \\
				\hline
				$3$&$[G,wP_1,wP_2]$ &$[10,4,5]$ \\
				\hline
				$4$ (HSO)&$[G,wP_1,wP_2,wP_3]$ &$\textbf{[11,4,6]}$ \\
				\hline										
			\end{tabular}
			\caption{Shortest $t$-dimensional Hermitian hull embeddings of $\C_{8,4}$}
		\end{table}
		Comparing with the BKLC database, the $[11,4,6]_9$ linear code generated by $[G,wP_1,wP_2,wP_3]$ is almost optimal.
	\end{example}
	\subsection{Euclidean case}
	In this subsection, we will show two algorithms and some examples on constructing the shortest $t$-dimensional Euclidean hull embeddings of $q$-ary linear codes. We first present the construction algorithm and examples for $q$ being an odd prime power.
	\begin{algorithm}
		\caption{Construction of a shortest $t$-dimensional Euclidean hull embedding of $q$-ary linear code when $q$ is an odd prime power }
		\begin{algorithmic}[1]
			\State Let $q$ be an odd prime power and let $\C$ be an $[n,k]$ linear code over $\F_{q}$ with a generator matrix $G$ and $0 \leq t \leq k$
			\State \textbf{Goal:} Find a $k \times (n+s)$ matrix $\widetilde{G}$ over $\F_{q}$ that generates a shortest $t$-dimensional Euclidean hull embedding of $\C$
			\Procedure{OddEuclideanHullEmbedding}{$G$, $t$}
			\State $k \gets \text{the number of rows in matrix } G$
			\State $\ell \gets k-\text{rank}(GG^T)$
			\State Precompute a $k \times k$ invertible matrix $P$ s.t. $-GG^T=P\text{diag}(I_{k-\ell-1},z,0,\dots,0)P^T$
			\State  $s \gets  |t-\ell| $
			\State $D \gets  O_{s\times (k-s)}$
			\If{$t < \ell$}
			\State  $D \gets [D,I_s]^T  $
			\ElsIf{$z \in {\F_q^*}^2$}
			\State $P \gets P\text{diag}(I_{k-\ell-1},z^{\frac{1}{2}},0,\dots,0)$
			\State $D \gets  P[I_s,D]^T $
			\Else
			\If{$\ell <t \leq k-1$}
			\State $D \gets  P[I_s,D]^T$
			\Else
			\State Compute $z_1,z_2 \in \F_q$ s.t. $z=z_1^2+z_2^2$
			\State $D'\gets \begin{bmatrix} I_{k-\ell-1}& \textbf{0}&\textbf{0} \\ \textbf{0} & z_1&z_2 \end{bmatrix}$
			\State  $D \gets  P\begin{bmatrix} D'  \\O_{\ell\times (s+1)} \end{bmatrix}$
			\EndIf
			\EndIf
			\State $\widetilde{G} \gets [G, D]$
			\State \Return $\widetilde{G}$
			\EndProcedure
		\end{algorithmic}
	\end{algorithm}
	\clearpage
	\begin{example}\label{2089}
		Let $\C_{18,8}$ be an $[18,8,7]$ ternary optimal code (from BKLC). There are two matrices
		\[G= \left[
		\setlength{\arraycolsep}{3pt}
		\renewcommand{\arraystretch}{0.7}
		\begin{array}{cccccccccccccccccc}
			1 & 0 & 0 & 0 & 0 & 0 & 0 & 0 & 0 & 1 & 1 & 0 & 1 & 2 & 2 & 0 & 1 & 0 \\
			0 & 1 & 0 & 0 & 0 & 0 & 0 & 0 & 0 & 2 & 0 & 1 & 2 & 2 & 0 & 2 & 2 & 1 \\
			0 & 0 & 1 & 0 & 0 & 0 & 0 & 0 & 1 & 1 & 1 & 1 & 2 & 2 & 2 & 0 & 2 & 2 \\
			0 & 0 & 0 & 1 & 0 & 0 & 0 & 0 & 2 & 2 & 1 & 0 & 2 & 0 & 0 & 2 & 2 & 2 \\
			0 & 0 & 0 & 0 & 1 & 0 & 0 & 0 & 2 & 2 & 1 & 0 & 0 & 1 & 2 & 0 & 0 & 2 \\
			0 & 0 & 0 & 0 & 0 & 1 & 0 & 0 & 2 & 0 & 2 & 0 & 1 & 1 & 2 & 2 & 2 & 0 \\
			0 & 0 & 0 & 0 & 0 & 0 & 1 & 0 & 0 & 2 & 0 & 2 & 0 & 1 & 1 & 2 & 2 & 2 \\
			0 & 0 & 0 & 0 & 0 & 0 & 0 & 1 & 2 & 0 & 1 & 2 & 2 & 2 & 0 & 1 & 0 & 2
		\end{array}\right],
		\]
		and
		$$
		P=\left[
		\setlength{\arraycolsep}{3pt}
		\renewcommand{\arraystretch}{0.7}
		\begin{array}{ccccccccccc}
			2 & 2 & 0 & 0 & 0 & 0 & 0 & 0 \\
			0 & 1 & 0 & 0 & 0 & 0 & 0 & 0 \\
			1 & 1 & 0 & 0 & 0 & 0 & 0 & 1 \\
			1 & 0 & 1 & 0 & 0 & 0 & 0 & 0 \\
			2 & 1 & 0 & 1 & 0 & 0 & 0 & 0 \\
			2 & 0 & 0 & 0 & 1 & 0 & 0 & 0 \\
			2 & 0 & 0 & 0 & 0 & 1 & 0 & 0 \\
			0 & 1 & 0 & 0 & 0 & 0 & 1 & 0
		\end{array}\right]=[P_1,P_2,\dots,P_8],
		$$
		where $G$ is a generator matrix of $\C_{18,8}$ and $P$ is an invertible matrix such that $$-GG^T=P\text{diag}(1,1,0,0,0,0,0,0)P^T .$$ Therefore,  $\dim(\text{Hull}(\C_{18,8}))=8-\text{rank}(GG^T)=6$ and $\C_{18,8}$ is of type \textbf{($\text{E}_{\text{o,s}}$)}. Using Algorithm 2, we obtain Table 3 (ESO stands for Euclidean self-orthogonal).
		\begin{table}[H]
			\centering
			\vspace{-0.2em}
			\renewcommand{\arraystretch}{1.1}
			\begin{tabular}{|c|c|c|c|c|c|}
				\hline
				$t$ & Generator matrix & $[n,k,d]$ &  $t$ & Generator matrix & $[n,k,d]$ \\
				\hline
				$0$ (LCD)&$[G,P_3,P_4,P_5,P_6,P_7,P_8]$ &$[24,8,7]$ & $4$&$[G,P_7,P_8]$ &$[20,8,7]$\\
				\hline
				$1$& $[G,P_4,P_5,P_6,P_7,P_8]$ & $[23,8,7]$ &  $5$ &$[G,P_8]$ &$[19,8,7]$ \\
				\hline
				$2$ &$[G,P_5,P_6,P_7,P_8]$ &$[22,8,7]$ & $7$& $[G,P_1]$&$\textbf{[19,8,8]}$\\
				\hline
				$3$ &$[G,P_6,P_7,P_8]$ &$[21,8,7]$ & $8$ (ESO)&$[G,P_1,P_2]$ &$\textbf{[20,8,9]}$ \\
				\hline
				
			\end{tabular}
			\caption{Shortest $t$-dimensional Euclidean hull embeddings of $\C_{18,8}$}
		\end{table}
		
		By comparison with the BKLC database, the $[19,8,8]_3$ linear code generated by $[G,P_1]$ is optimal and is not equivalent to BKLC(GF(3),19,8); the $[20,8,9]_3$ linear code generated by $[G,P_1,P_2]$ is optimal and is not equivalent to BKLC(GF(3),20,8).
	\end{example}
	\begin{example}
		Let $\C_{9,6}$ be a $[9,6,3]_5$ optimal code (from BKLC). There are two matrices
		$$
		G=\left[
		\setlength{\arraycolsep}{3pt}
		\renewcommand{\arraystretch}{0.7}
		\begin{array}{ccccccccc}
			1 & 0 & 0 & 0 & 0 & 0 & 2 & 4 & 4 \\
			0 & 1 & 0 & 0 & 0 & 0 & 4 & 1 & 1 \\
			0 & 0 & 1 & 0 & 0 & 0 & 1 & 0 & 4 \\
			0 & 0 & 0 & 1 & 0 & 0 & 4 & 0 & 2 \\
			0 & 0 & 0 & 0 & 1 & 0 & 2 & 1 & 1 \\
			0 & 0 & 0 & 0 & 0 & 1 & 1 & 3 & 4 \\
		\end{array}\right],
		$$
		and
		$$
		P=\left[
		\setlength{\arraycolsep}{3pt}
		\renewcommand{\arraystretch}{0.7}
		\begin{array}{cccccc}
			0 & 2 & 2 & 0 & 0 & 0 \\
			3 & 1 & 1 & 0 & 0 & 0 \\
			2 & 1 & 0 & 1 & 0 & 0 \\
			0 & 1 & 1 & 1 & 0 & 0 \\
			2 & 4 & 0 & 3 & 0 & 1 \\
			3 & 4 & 1 & 4 & 1 & 0 \\
		\end{array}\right]=[P_1,P_2,\dots,P_6],
		$$
		where $G$ is a generator matrix of $\C_{9,6}$ and $P$ is an invertible matrix such that $$-GG^T=P\text{diag}(1,1,1,2,0,0)P^T .$$
		We have $\dim(\text{Hull}(\C_{9,6}))=6-\text{rank}(GG^T)=2$ and $\C_{9,6}$ is of type \textbf{($\text{E}_{\text{o,ns}}$)}. Note that $2=1^2+1^2$ and $P_4=P(0,0,0,1)^T$. Using Algorithm 2, Table 4 can be depicted as follows.
		\begin{table}[H]
			\centering
			\vspace{-0.2em}
			\renewcommand{\arraystretch}{1.1}
			\begin{tabular}{|c|c|c|c|c|c|}
				\hline
				$t$ & Generator matrix & $[n,k,d]$ &  $t$ & Generator matrix & $[n,k,d]$ \\
				\hline
				$0$ (LCD)&$[G,P_5,P_6]$ &$[11,6,3]$ & $4$&$[G,P_1,P_2]$ &$\textbf{[11,6,4]}$\\
				\hline
				$1$& $[G,P_6]$ & $[10,6,3]$ &  $5$ &$[G,P_1,P_2,P_3]$ &$[12,6,4]$ \\
				\hline
				$3$ &$[G,P_1]$ &$[10,6,3]$ & $6$ (ESO)& $[G,P_1,P_2,P_3,P_4,P_4]$&$\textbf{[14,6,6]}$\\
				\hline
				
			\end{tabular}
			\caption{Shortest $t$-dimensional Euclidean hull embeddings of $\C_{9,6}$}
		\end{table}
		By comparison with the BKLC database, the $[11,6,4]_5$ code generated by $[G,P_1,P_2]$ and the $[14,6,6]_5$ code generated by $[G,P_1,P_2,P_3,P_4,P_4]$ are almost optimal.
	\end{example}
	\begin{algorithm}
		\caption{Construction of a shortest $t$-dimensional Euclidean hull embedding of $q$-ary linear code when $q=2^m$}
		\begin{algorithmic}[1]
			\State	Let $q=2^m$ and let $\C$ be an $[n,k]$ linear code over $\F_{q}$ with generator matrix $G$ and $0 \leq t \leq k$
			\State \textbf{Goal:} Find a $k \times (n+s)$ matrix $\widetilde{G}$ over $\F_{q}$ which generates a shortest $t$-dimensional Euclidean hull embedding of $\C$
			\Procedure{EvenEuclideanHullEmbedding}{$G$, $t$}
			\State $k \gets \text{the number of rows in matrix } G$
			\State $\ell \gets k-\text{rank}(GG^T)$
			\State $\mathcal{A}\gets \{g_{ii}:GG^T=[g_{ij}],1\leq i,j \leq k\}$
			\State  $s \gets  |t-\ell| $
			\State $D \gets  O_{s\times (k-s)} ,J\gets \begin{bmatrix}0 &1 \\ 1& 0\end{bmatrix}$
			\If{$\mathcal{A} \neq \{0\}$}
			\State Compute $k\times k$ invertible matrix $P$ s.t. $GG^T=P\text{diag}(I_{(k-\ell)},0,\dots,0)P^T$
			\If{$t < \ell$}
			\State  $D \gets P[D,I_s]^T$
			\Else
			\State $D \gets  P[I_s,D]^T$
			\EndIf
			\Else
			\State Compute $k\times k$ invertible matrix $P$ s.t. $GG^T=P\text{diag}(J,\dots,J,0,\dots,0)P^T$
			\If{$t < \ell$}
			\State  $D \gets P[D,I_s]^T$
			\ElsIf{ $t-\ell$ is odd}
			\State Compute $s\times s$ invertible matrix $D'$ s.t. $\text{diag}(J,\dots,J,1)=D'D'^T$ ($\frac{s-1}{2}$ $J$-blocks)
			\State $D\gets P\begin{bmatrix}D' &\textbf{0} \\ \textbf{0}& 1 \\O_{(k-s-1)\times s}&\textbf{0}\end{bmatrix}$
			\Else
			\State$P' \gets \begin{bmatrix}1 &1&0 \\ 1& 0 &1\end{bmatrix},r\gets 1$
			\While{$r<s/2$}
			\State $P'\gets \begin{bmatrix}
				P'&O_{2r \times 2}\\
				\textbf{1}_{2,2r+1}&J
			\end{bmatrix}$
			\State $r \gets r+1$
			\EndWhile
			\State $D\gets P\begin{bmatrix} P'\\ O_{(k-s) \times (s+1)}\end{bmatrix}$
			\EndIf
			\EndIf
			\State $\widetilde{G} \gets [G, D]$
			\State \Return $\widetilde{G}$
			\EndProcedure
		\end{algorithmic}
	\end{algorithm}
	\clearpage
	\begin{example}\label{844}
		Let $\mathcal{H}_{2,3}$ be the $[7,4,3]_2$ Hamming code. There are two matrices
		\[G= \left[
		\setlength{\arraycolsep}{4pt}
		\renewcommand{\arraystretch}{0.9}
		\begin{array}{ccccccc}
			1 & 0 & 0 & 0 & 1 & 1 & 0 \\
			0 & 1 & 0 & 0 & 0 & 1 & 1 \\
			0 & 0 & 1 & 0 & 1 & 1 & 1 \\
			0 & 0 & 0 & 1 & 1 & 0 & 1 \\
		\end{array}\right],
		\]
		and
		$$
		P=\left[\setlength{\arraycolsep}{4pt}
		\renewcommand{\arraystretch}{0.9}
		\begin{array}{cccc}
			1 & 0 & 0 & 0 \\
			1 & 0 & 0 & 1 \\
			0 & 1 & 0 & 0 \\
			1 & 0 & 1 & 0
		\end{array}\right]=[P_1,P_2,P_3,P_4],
		$$
		where $G$ is a generator matrix of $\mathcal{H}_{2,3}$ and $P$ is an invertible matrix such that $$GG^T=P\text{diag}(1,0,0,0)P^T .$$ Therefore,  $\dim(\text{Hull}(\mathcal{H}_{2,3}))=4-\text{rank}(GG^T)=3$ and $\mathcal{H}_{2,3}$ is of type \textbf{($\text{E}_{\text{e,na}}$)}. Applying Algorithm 3, we obtain Table 5.
		\begin{table}[H]
			\centering
			\vspace{-0.2em}
			\renewcommand{\arraystretch}{1.1}
			\begin{tabular}{|c|c|c|}
				\hline
				$t$&Generator matrix &$[n,k,d]$ \\
				\hline
				$0$ (LCD)&$[G,P_2,P_3,P_4]$ & $[10,4,3]$\\
				\hline
				$1$&$[G,P_3,P_4]$ &$[9,4,3]$ \\
				\hline
				$2$&$[G,P_4]$ &$[8,4,3]$ \\
				\hline
				$4$ (ESO)&$[G,P_1]$ &$\textbf{[8,4,4]}$ \\
				\hline										
			\end{tabular}
			\caption{Shortest $t$-dimensional Euclidean hull embeddings of $\mathcal{H}_{2,3}$}
		\end{table}
		By comparison with the BKLC database, the $[8,4,4]_2$ linear code generated by $[G,P_1]$ is optimal and is equivalent to BKLC(GF(2),8,4).
	\end{example}
	\begin{example}\label{2068}
		Let $\C_{15,6}$ be a $[15,6,6]_2$ optimal code (from BKLC). There are two matrices
		$$
		G=\left[
		\setlength{\arraycolsep}{3pt}
		\renewcommand{\arraystretch}{0.7}
		\begin{array}{ccccccccccccccc}
			1 & 0 & 0 & 0 & 0 & 0 & 0 & 0 & 1 & 0 & 0 & 1 & 1 & 1 & 1 \\
			0 & 1 & 0 & 0 & 0 & 0 & 1 & 0 & 0 & 0 & 1 & 1 & 0 & 1 & 1 \\
			0 & 0 & 1 & 0 & 0 & 0 & 1 & 1 & 0 & 1 & 1 & 0 & 0 & 0 & 1 \\
			0 & 0 & 0 & 1 & 0 & 0 & 1 & 1 & 1 & 1 & 0 & 0 & 1 & 0 & 0 \\
			0 & 0 & 0 & 0 & 1 & 0 & 0 & 1 & 1 & 1 & 1 & 0 & 0 & 1 & 0 \\
			0 & 0 & 0 & 0 & 0 & 1 & 0 & 0 & 1 & 1 & 1 & 1 & 0 & 0 & 1
		\end{array}\right],
		$$
		and
		$$
		P=\left[
		\setlength{\arraycolsep}{3pt}
		\renewcommand{\arraystretch}{0.7}
		\begin{array}{cccccc}
			1 & 0 & 0 & 0 & 0 & 0 \\
			0 & 1 & 0 & 0 & 0 & 0 \\
			1 & 1 & 1 & 0 & 1 & 0 \\
			1 & 0 & 1 & 0 & 0 & 0 \\
			0 & 0 & 0 & 1 & 1 & 1 \\
			1 & 1 & 1 & 1 & 0 & 0
		\end{array}\right]=[P_1,P_2,\dots,P_6],
		$$
		where $G$ is a generator matrix of $\C_{15,6}$ and $P$ is an invertible matrix such that $$GG^T=P\text{diag}(J,J,0,0)P^T .$$
		We have $\dim(\text{Hull}(\C_{15,6}))=6-\text{rank}(GG^T)=2$ and $\C_{15,6}$ is of type \textbf{($\text{E}_{\text{e,a}}$)}. Applying Algorithm 3, we get Table 6.
		\begin{table}[H]
			\centering
			\vspace{-0.2em}
			\renewcommand{\arraystretch}{1.1}
			\begin{tabular}{|c|c|c|}
				\hline
				$t$ & Generator matrix & $[n,k,d]$ \\
				\hline
				$0$ (LCD)&$[G,P_5,P_6]$ &$[17,6,6]$ \\
				\hline
				$1$& $[G,P_6]$ & $[16,6,6]$ \\
				\hline
				$3$ &$[G,P_1,P_2]$ &$[17,6,6]$ \\
				\hline
				$4$&$[G,P_1+P_2,P_1,P_2]$ &$[18,6,6]$\\
				\hline
				$5$ &$[G,P_2+P_3,P_1+P_3,P_1+P_2+P_3,P_4]$ &$\textbf{[19,6,7]}$ \\
				\hline
				$6$ (ESO)& $[G,P_1+P_2+P_3+P_4,P_1+P_3+P_4,P_2+P_3+P_4,P_4,P_3]$&$\textbf{[20,6,8]}$ \\
				\hline
				
			\end{tabular}
			\caption{Shortest $t$-dimensional Euclidean hull embeddings of $\C_{9,6}$}
		\end{table}
		By comparison with the BKLC database, the $[19,6,7]_2$ code generated by $[G,P_2+P_3,P_1+P_3,P_1+P_2+P_3,P_4]$ is almost optimal; the $[20,6,8]_2$
		code generated by $[G,P_1+P_2+P_3+P_4,P_1+P_3+P_4,P_2+P_3+P_4,P_4,P_3]$ is optimal and is not equivalent to BKLC(GF(2),20,6).
	\end{example}
	\begin{remark}
		The complexity of all the Algorithms 1, 2 and 3 mainly comes from the congruence diagonalization in the steps of ``Compute'' and ``Precompute''. The diagonalization algorithm (see \cite[char. 8]{Wan2004}) uses elementary transformations of matrices to recursively eliminate off-diagonal entries or blocks, leading to a total complexity of at most $O(n^3)$.
	\end{remark}
	\section{Conclusion}
	This paper investigates the problem of extending a generator matrix of an $[n,k]$ linear code with $\ell$-dimensional hull by adding the minimum number of columns, so that the resulting code is an $[n',k]$ linear code with a prescribed hull dimension $t$. For the Hermitian case, we proved that the length of the shortest $t$-dimensional Hermitian hull embeddings is $n+|t-\ell|$. We provided an explicit construction algorithm (Algorithm 1) and applied it to obtain optimal and almost optimal codes.
	
	For the Euclidean case, we classified linear codes into four ``types'' based on the congruence equivalence classes of their Gram matrices. Using the properties of these types, we derived the exact length of the shortest $t$-dimensional Euclidean hull embeddings (Theorem \ref{sel}) and corresponding construction algorithms (Algorithms 2 and 3). In particular, our results for $q=2^m$
	improve the work in \cite{An2025}. Finally, using these algorithms, we constructed several optimal and almost optimal codes from Hamming codes and other known optimal codes, some of which are inequivalent to the BKLC database.

\end{document}